\documentclass[%
 reprint,
superscriptaddress,
 amsmath,amssymb,
 aps,
]{revtex4-2}

\usepackage{graphicx}
\usepackage{dcolumn}
\usepackage{bm}

\usepackage[utf8]{inputenc}
\usepackage[english]{babel}
\usepackage[T1]{fontenc}
\usepackage{amsmath}
\usepackage{amsthm}
\usepackage{amssymb}
\usepackage{hyperref}
\hypersetup{
    colorlinks=true,
    linkcolor=blue,
    filecolor=blue,      
    urlcolor=blue,
    citecolor=blue,
}
\usepackage{array}
\usepackage{booktabs}
\usepackage{CJKutf8}

\usepackage{tikz}
\usepackage{lipsum}

\newtheorem{theorem}{Theorem}

\newcommand{\ket}[1]{ | {#1} \rangle}

\DeclareUnicodeCharacter{2009}{ }
\renewcommand{\selectlanguage}[1]{}

\begin{document}

\preprint{APS/123-QED}

\title{Bosonic Cyclic Codes: Trading Stabilizers for Gaussian Non-Clifford Phase Gates}

\author{Owen C. Wetherbee}
\email[Contact author: ]{owenweth@berkeley.edu}
\affiliation{Department of Physics, Cornell University, Ithaca, NY, 14853, USA}
\affiliation{Department of Physics, University of California, Berkeley, California 94720, USA}

\author{Yijia Xu (\begin{CJK*}{UTF8}{gbsn}许逸葭\end{CJK*})}
\affiliation{Joint Center for Quantum Information and Computer Science, University of Maryland, College Park, Maryland 20742, USA}
\affiliation{Institute for Physical Science and Technology, University of Maryland, College Park, Maryland 20742, USA}

\author{Victor V. Albert}
\affiliation{Joint Center for Quantum Information and Computer Science, NIST/University of Maryland, College Park, Maryland 20742, USA}

\author{Baptiste Royer}
\affiliation{Département de Physique and Institut Quantique, Université de Sherbrooke, Sherbrooke J1K 2R1, QC, Canada}

\author{Valla Fatemi}
\affiliation{School of Applied and Engineering Physics, Cornell University, Ithaca, NY, 14853, USA}

\date{\today}

\begin{abstract}
Bosonic codes offer hardware-efficient approaches to quantum error correction, with the best encodings offering effective protection of idle quantum information against loss and dephasing -- particularly rotation-symmetric codes, which include the cat and binomial code families.
However, rotation-symmetric codes are only naturally endowed with a single logical Pauli gate, while other logical gates require the use of non-linear operations, obstructing the utility of these codes for realizing quantum algorithms.
Here, we balance error protection with controllability by introducing bosonic cyclic codes: a generalization of rotation-symmetric codes that enable the measured tradeoff of error protection properties for fault-tolerant logical phase gates.
Through our general construction, we find that sacrificing the detectability of a single photon loss relative to a rotation-symmetric code can yield a number of logical phase gates commensurate with the original rotation symmetry order of the code, all achievable via passive Gaussian rotations.
Giving the corresponding generalizations of cat and binomial codes -- which we dub cyclic cat and Vandermonde codes, respectively -- we further find that many of the desirable properties of these codes transfer to the bosonic cyclic code setting.
We go on to discuss the larger $SU(2)$ symmetry and rotation gates of the codes, which yield additional stabilizers and logical Pauli gates, as well as new non-Clifford gates for the smallest `kitten' binomial code, and provide a new error detection protocol.
Finally, we introduce a general paradigm for converting higher-order stabilizers to logical gates, as in our generalization of rotation-symmetric codes, and apply it to several multimode bosonic codes.

\end{abstract}

\maketitle

\section{Introduction}
Encoding quantum information in the infinite Hilbert space of a bosonic system provides a hardware-efficient alternative to quantum error correction relative to conventional qubit-based codes~\cite{cai_bosonic_2021,terhal_towards_2020,liu_hybrid_2026}.
There are numerous experimental platforms that can realize such bosonic encodings, such as the electromagnetic modes in optical systems~\cite{kimble_strong_1998,mabuchi_cavity_2002} and the phononic modes in trapped ion systems~\cite{leibfried_quantum_2003,monroe_scaling_2013,fluhmann_encoding_2019}.
Chief among them, superconducting circuits have emerged as a leading candidate due to their low decoherence and large noise bias to photon loss~\cite{reagor_quantum_2016,lei_high_2020,milul_superconducting_2023,ganjam_surpassing_2024}.
In tandem with these hardware platforms, various bosonic encoding schemes have been developed that protect against the dominant errors in these systems, most notably photon loss~\cite{chuang_bosonic_1997,gottesman_encoding_2001,cochrane_macroscopically_1999}.
These include the cat~\cite{cochrane_macroscopically_1999,mirrahimi_dynamically_2014} and binomial~\cite{michael_new_2016} code families, which have enjoyed quick experimental progress, with several realizations recently achieving the so-called break-even point~\cite{hu_quantum_2019,ni_beating_2023,sivak_real-time_2023,ofek_extending_2016}.
The cat and binomial codes can both be understood as specific cases of the more general rotation-symmetric codes~\cite{grimsmo_quantum_2020}, characterized by discrete phase-space rotational symmetry and modular occupation in Fock space.

While these bosonic encodings have shown promise as quantum memories, enhancing the lifetime of idle quantum information, implementing the fault-tolerant operations required for enacting quantum algorithms -- the ultimate goal of quantum computing -- has remained a challenge.
While logical $\bar{Z}$ gates come naturally with the rotation-symmetric construction, implementing any other logical gate requires introducing non-linearities into the system, typically by coupling the bosonic mode to a finite-level ancilla.
These, often noisy, ancillae greatly reduce the effectiveness of the error-correction, as protected quantum information leaks out through the ancilla during the operations.
In addition, as the logical states temporarily leave their encoded subspace during operations, even pure bosonic errors are no longer necessarily protected.

One emerging approach to address this problem is to design codes with logical gates in mind from the outset.
In particular, rather than constructing codes to protect against the dominant errors in a system, and then searching for suitable operations, we can construct codes for which the natural operations of the system enact logical gates, and then later ensure protection against errors.
This methodology has seen great success for constructing big spin~\cite{gross_designing_2021} and multimode bosonic~\cite{jain_quantum_2024,aydin_quantum_2025,wxl_tessellation} encodings, and has blossomed into the so-called covariant encoding formalism~\cite{denys_quantum_2024}.
Of course, opposite to the conventional method of code construction, there is the inherent drawback of a lowered emphasis from the outset on protection against errors.

Here, we take a step towards fault-tolerant control of bosonic encodings, interpolating between these two code-construction methodologies, by introducing bosonic cyclic codes: a generalization of rotation-symmetric codes that enable the measured tradeoff of error protection properties for fault-tolerant logical phase gates.
In particular, we show that by lowering the Fock spacing of a rotation-symmetric code, but retaining the modular spacing of the individual code words, the phase-space rotation stabilizers can be converted to discrete logical phase gates.
This allows for crucial non-Clifford gates, such as $\bar{T}$, to be implemented with reference oscillator phase updates, while retaining the protection of the code against photon losses.
After detailing the general construction, we provide the corresponding generalizations of cat and binomial codes, which we dub cyclic cat and Vandermonde codes, respectively.
We find that many of the desirable properties of these popular rotation-symmetric codes -- e.g., methods of cat code state preparation~\cite{liu_hybrid_2026} and the exactly-correcting nature of binomial codes~\cite{michael_new_2016} -- transfer over to the general bosonic cyclic code case.
We then discuss the larger $SU(2)$ symmetry and rotation gates of the codes, which yield additional stabilizers and logical $\bar{X}$ gates, as well as new non-Clifford gates for the smallest `kitten' binomial code.
Finally, we provide a potential error detection protocol for bosonic cyclic codes based on nested Ramsey interferometry measurements~\cite{sun_tracking_2014,jin_general_2025}, and present a few extensions of our codes to the multimode case.
This multimode extension in particular highlights how our general methodology could be applied to other established encodings, in both bosonic contexts and elsewhere, to realize an analogous tradeoff of error protection properties for natural gates.

\begin{figure*}[ht]
    \centering
    \includegraphics[width=\textwidth]{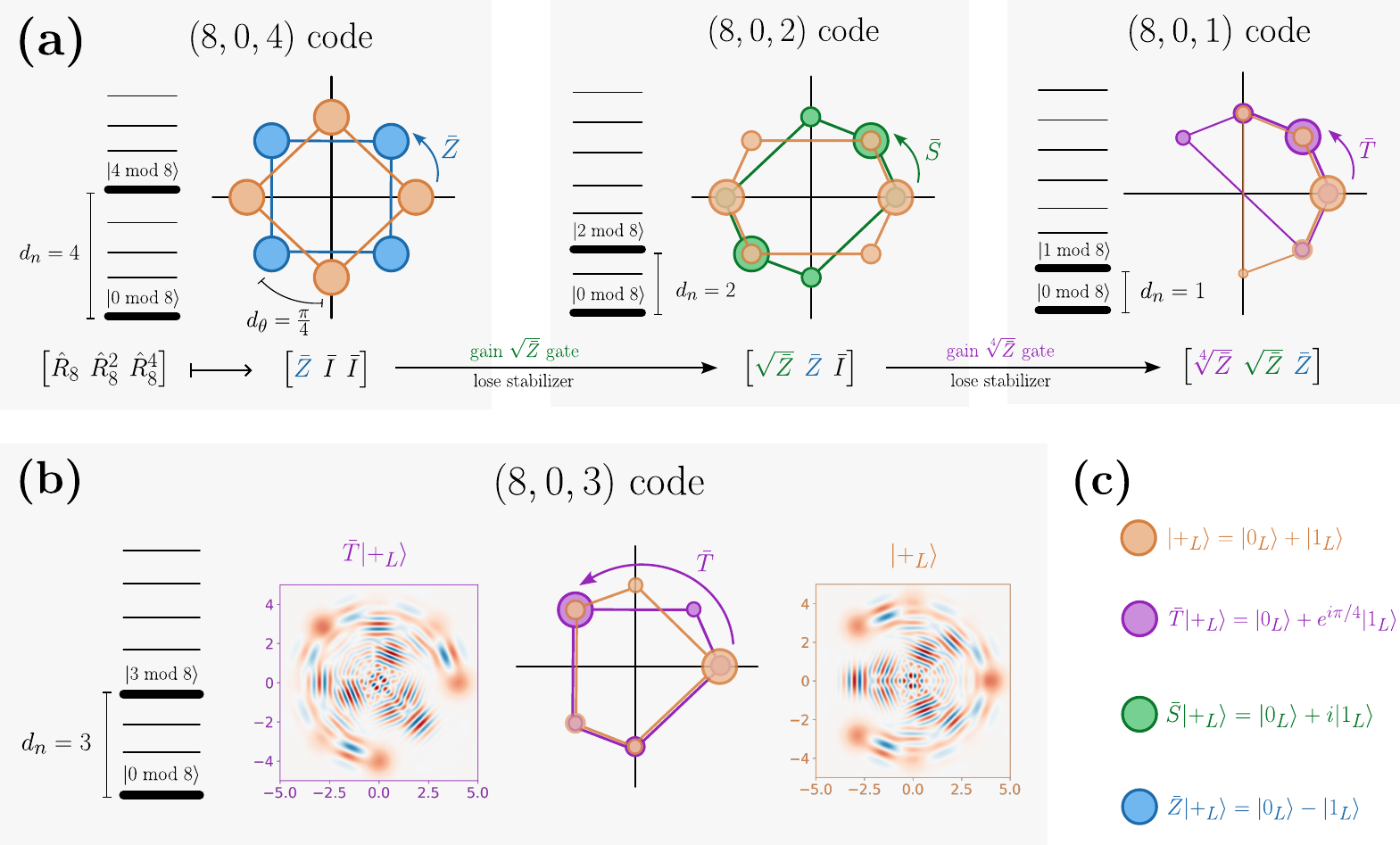}
    \caption{\textbf{Summary of bosonic cyclic codes. (a)} Three bosonic cyclic codes which demonstrate the tradeoff between phase-space rotation gates and codespace stabilizers.
    The three code parameters $(S, n_{0}, n_{1})$ determine the Fock state periodicity, the $|0_{L}\rangle$ Fock support (mod $S$), and the $|1_{L}\rangle$ Fock support (mod $S$), respectively.
    The resulting Fock structure of each code is shown on the left of each panel.
    Ball-and-stick phase-space diagrams of relevant logical states (see (c)) are depicted on the right of each panel to demonstrate the rotation symmetry and gates of each code.
    The balls correspond to peaks of the Wigner function, with their size indicating the relative amplitude of the peak, and the sticks correspond to fringes.
    At the bottom of each panel, the number of available rotation gates is indicated by the logical gate ($\bar{Z}$, $\bar{S} = \sqrt{\bar{Z}}$, or $\bar{T} = \sqrt[4]{\bar{Z}}$) achieved by a $\hat{R}_{S} = \hat{R}_{8} = \exp\left(i\frac{\pi}{4}\hat{n}\right)$ rotation.
    We note that these three codes can also be constructed as traditional rotation-symmetric codes.
    \textbf{(b)} A summary of the $(8, 0, 3)$ cyclic code -- notably not a rotation-symmetric code -- which better balances the above tradeoff by having a phase-space rotation $\bar{T}$ gate while still achieving a code distance of $d_{n} = 3$.
    The Wigner functions of the logical states in (c) for a $(8, 0, 3)$ cyclic cat code with $\alpha = 4$ are also shown alongside the ball-and-stick diagram, clearly demonstrating the $\bar{T}$ action of the $\hat{R}_{8}^{3} = \exp\left(i\frac{3\pi}{4}\hat{n}\right)$ rotation.
    \textbf{(c)} Logical states represented in the ball-and-stick diagrams in (a) and (b).}
    \label{fig:cyclicCodesSummary}
\end{figure*}

\section{Bosonic cyclic codes}\label{sec:bosonicCyclicCodes}
The bosonic cyclic codes we define here generalize bosonic rotation-symmetric codes to allow for many more `easy' gates at the expense of slightly reduced protection against photon loss.
As with rotation-symmetric codes~\cite{grimsmo_quantum_2020}, these cyclic codes can be naturally defined either in terms of their rotation-symmetry or their Fock-space structure.
Beginning with the former, we define an order-$S$ cyclic code to be any code where the discrete rotation operator
\begin{equation}
    \begin{aligned}
        \hat{R}_{S} = \exp\left(i\frac{2\pi}{S}\hat{n}\right)
    \end{aligned}
    \label{eqn:rotSymmetry}
\end{equation}
acts as a nontrivial logical $\bar{Z}(\theta)$ gate on the codespace\footnote{By convention, we require $\hat{R}_{S}$ to be the smallest rotation which acts as a logical $\bar{Z}(\theta)$ on the codespace, so that $S$ is the maximum order of the code.}, where $\hat{n} = \hat{a}^{\dagger}\hat{a}$ is the photon number operator.
Note that since $\hat{R}_{S}^{S} = \hat{I}$, $\theta$ must be of the form $\theta = \frac{2\pi N}{S}$ for some $N \neq 0 \in \mathbb{Z}_{S}$.

Similar to rotation-symmetric codes, any order-$S$ cyclic code can be constructed from rotated superpositions of some primitive state $|\Theta\rangle$\footnote{To yield nonzero codewords, $|\Theta\rangle$ must have support on at least one of the $|n_{0} \bmod S\rangle$ Fock states and one of the $|n_{1} \bmod S\rangle$ Fock states.}:
\begin{equation}
    \begin{aligned}
        |0_{L}\rangle & = \frac{1}{\sqrt{\mathcal{N}_{0}}}\sum_{m = 0}^{S - 1}\left[ \hat R_S e^{i\frac{2\pi}{S} n_{0}}\right]^m|\Theta\rangle ~, \\
        |1_{L}\rangle & = \frac{1}{\sqrt{\mathcal{N}_{1}}}\sum_{m = 0}^{S - 1}\left[ \hat R_S e^{i\frac{2\pi}{S} n_{1}}\right]^m|\Theta\rangle ~,
    \end{aligned}
    \label{eqn:phaseStructure}
\end{equation}
where $\mathcal{N}_{0}$ and $\mathcal{N}_{1}$ are normalization constants and $n_{0} \neq n_{1} \in \mathbb{Z}_{S}$.
The logical codewords of a code can be written in this form if and only if $\hat{R}_{S}$ acts as a logical $\bar{Z}\left(\frac{2\pi N}{S}\right)$ gate on the code for $N = (n_{1} - n_{0}) \bmod S$.
In this way, a bosonic cyclic code is entirely specified by a primitive state $|\Theta\rangle$ and the code parameters $S$, $n_{0}$, and $n_{1}$.

A bosonic cyclic code can also be specified by its Fock-grid coefficients $c_{k}$.
Indeed, once $S$, $n_{0}$, and $n_{1}$ are given, the Fock-space structure of a cyclic code is constrained to be of the form
\begin{equation}
    \begin{aligned}
        |0_{L}\rangle & = \sum_{j = 0}^{\infty}c_{2j}|jS + n_{0}\rangle ~, \\
        |1_{L}\rangle & = \sum_{j = 0}^{\infty}c_{2j + 1}|jS + n_{1}\rangle ~,
    \end{aligned}
    \label{eqn:fockStructure}
\end{equation}
where the coefficients $c_{k}$ satisfy the normalization condition
\begin{equation}
    \sum_{j = 0}^{\infty}|c_{2j}|^{2} = \sum_{j = 0}^{\infty}|c_{2j + 1}|^{2} = 1 ~.
\end{equation}

From these definitions, it is clear that order-$N$ rotation-symmetric codes are the special case of these newly defined cyclic codes that arise when $n_{0} = 0$, $n_{1} = N$, and $S = 2N$~\cite{grimsmo_quantum_2020}.
In this case, $\hat{R}_{2N}$ acts as the usual $\bar{Z}\left(\frac{2\pi(N - 0)}{2N}\right) = \bar{Z}$ gate and both codewords are fully supported on the $|0 \bmod N\rangle$ Fock states.
We remark that the order $N$ of a rotation-symmetric code is always half of what we call the order $S$ of the corresponding cyclic code.

Both of $N$ and $S$ have a straightforward interpretation in terms of the Fock-space structure of Eq.~\eqref{eqn:fockStructure}.
$S$ corresponds to the \emph{word spacing} between adjacent occupied Fock states of each codeword and $N = (n_{1} - n_{0}) \bmod S$ corresponds to the minimum \emph{code spacing} between adjacent occupied Fock states of the codespace.
For cyclic codes, the code spacing is technically given by $\text{min}\left(N, S - N\right)$, but we generally take $N \leq S / 2$ so that this is $N$.

\subsection{Detectable errors and phase gates}\label{sec:errorGates}
The dual phase and Fock space structure of a generic $(S, n_{0}, n_{1})$ cyclic code endows it with the ability to detect a certain amount of photon losses and phase space rotations, specified by the number distance $d_{n} = \min(N, S - N)$ and rotation distance $d_{\theta} = 2\pi / S$, respectively.
As with rotation-symmetric codes, the order $S$ of the code represents a tradeoff between these two error distances: as $S$ increases, the Fock spacing of the code increases while the rotation distance decreases.

The advantage of considering bosonic cyclic codes is that, by fixing $S$ and varying $N$, an additional tradeoff is introduced between the Fock spacing of the code and the number of phase gates achievable via phase-space rotations.
Indeed, when $N$ is set to $N = S / 2$ to maximize Fock spacing, as in rotation-symmetric codes, Eq.~\eqref{eqn:phaseStructure} shows that $\hat{R}_{S}$ acts as a Pauli $\bar{Z}(\pi) = \bar{Z}$ gate, and since $\bar{Z}^{2} = \bar{I}$, this is the only nontrivial gate achievable via phase-space rotations.
Decreasing $N$ to $N = S / 4$, $\hat{R}_{S}$ instead acts as a phase $\bar{Z}\left(\frac{\pi}{2}\right) = \bar{S}$ gate, endowing the code with two additional phase-space rotation gates, but with the Fock spacing decreased by a factor of two.
Notably, this latter code is still technically a rotation-symmetric code, just with the order halved to $S / 4$ and many of the Fock-grid coefficients periodically set to zero.
This `spacing-gates tradeoff' can be visualized by placing the $|n \bmod S\rangle$ Fock states evenly around the unit circle of the complex plane, with $|0 \bmod S\rangle$ located at $e^{i(0)} = 1$, as in Fig.~\ref{fig:fockPhase} for $S = 8$.
The placement of these Fock states corresponds to the phase they pick up under a $\hat{R}_{S}$ rotation.
Indicating the $|n_{0} \bmod S\rangle$ and $|n_{1} \bmod S\rangle$ Fock states on this diagram, the angle between them then equals the relative phase $\theta$ of the $\bar{Z}(\theta)$ gate arising from the $\hat{R}_{S}$ rotation, and the acute angle between them represents the Fock spacing of the code.
Since a photon loss preserves the relative Fock spacing of the codewords -- shifting their placements on the Fock-phase diagram clockwise by one rung -- the $\hat{R}_{S}$ rotation has the same $\bar{Z}(\theta)$ action on the error spaces as the code space.
We remark that, while the angle between adjacent Fock states does correspond to the rotation distance of the code, these diagrams should not be thought of as visually representing the code in phase space.

\begin{figure}[ht]
    \centering
    \includegraphics[width=0.45\textwidth]{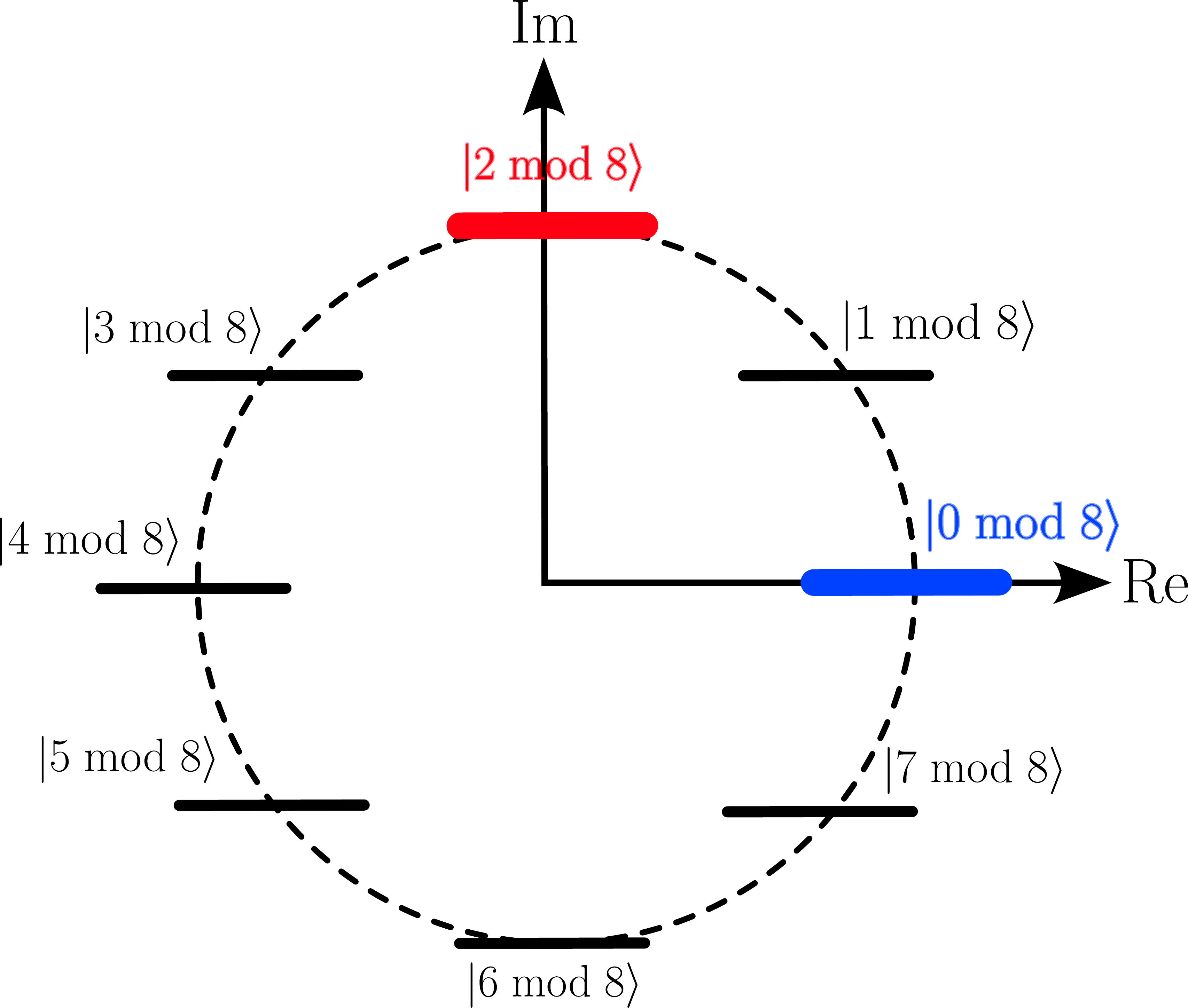}
    \caption{\textbf{Fock-phase diagram} for a $(S, n_{0}, n_{1}) = (8, 0, 2)$ code, representing the phase picked up by each $|n \bmod S\rangle$ parity manifold under a $\hat{R}_{S}$ rotation.
    This visualization elucidates that this code has a number distance of $d_{n} = 2$ and phase gate $\bar{S} = \bar{Z}\left(\frac{\pi}{2}\right)$ given by the phase-space rotation $\hat{R}_{8} = \exp\left(i\frac{\pi}{4}\hat{n}\right)$.
    Conveniently, the rotation distance of the code is also represented by the angle between the evenly-spaced Fock states, which in this case is $d_{\theta} = \frac{\pi}{4}$.
    }
    \label{fig:fockPhase}
\end{figure}

This visualization makes it clear that halving $N$, while more straightforward intuitively, does not make efficient use of the spacing-gates tradeoff.
Indeed, in the $S = 8$ case of Fig.~\ref{fig:fockPhase}, if we instead set $n_{1} = 3$, then $\hat{R}_{8}$ yields the logical phase gate $\bar{Z}\left(\frac{3\pi}{4}\right) = \bar{T}^{3}$.
Applying this rotation thrice then yields a logical $\bar{T}^{9} = \bar{T}$ gate, so that eventually all powers of $\bar{T}$ gates are generated by the $\hat{R}_{8}$ rotation.
Thus, while only losing one Fock spacing relative to the traditional $n_{1} = 4$ rotation-symmetric code, this code gains $\bar{S}$ and $\bar{T}$ gates, both achievable by simple phase-space rotations.
This hinges on the fact that, while $4$ and $8$ share a large common factor, $3$ and $8$ are coprime, so that $\left[\bar{Z}\left(2\pi\left(3 / 8\right)\right)\right]^{k}$ only equals $\bar{I}$ when $k$ is a multiple of $8$.
More generally, the number of phase gates a bosonic cyclic code has that are achievable by phase-space rotations is given by $S / \gcd(S, N)$, so that all the $S$ possible phase gates are achieved exactly when $S$ and $N$ are coprime.
Thus, from the perspective of the spacing-gates tradeoff, $N$ should be selected as the number coprime with $S$ that achieves the largest number distance $d_{n}$.
As summarized in Table~\ref{tab:cyclicCodesCases}, this number distance never has to be decreased more than $2$ from $S / 2$ to achieve an $N$ that is coprime with $S$.
In this way, decreasing the number distance of an order-$N$ rotation-symmetric code by only one or two transforms the $\bar{Z}$ gate to a $\sqrt[N]{\bar{Z}}$ gate, yielding $2N - 2$ additional phase gates achievable by phase-space rotations.

\begin{table}
    \centering
    \def\arraystretch{1.75}
    \newcolumntype{?}{!{\vrule width 0.8pt}}
    \aboverulesep = 0pt
    \belowrulesep = 0pt
    \begin{tabular}{c|ccc}
        \bottomrule
        \multicolumn{1}{r|}{$\Delta d_{n} \to$} & $0$ & $-1$ & $-2$ \\
        \bottomrule
        $S$ odd & $\sqrt[S/2]{\bar{Z}}$ & - & - \\
        $S / 2$ even & $\bar{Z}$ & $\sqrt[S/2]{\bar{Z}}$ & - \\
        $S / 2$ odd & $\bar{Z}$ & $\sqrt[S / 4]{\bar{Z}}$ & $\sqrt[S/2]{\bar{Z}}$ \\
        \bottomrule
    \end{tabular}
    \caption{Summary of the smallest phase gates achieved for order-$S$ cyclic codes with number distances $d_{n}$ in the vicinity of the maximal distance $d_{n} = S / 2$ (or $d_{n} = (S - 1) / 2$ for odd $S$), broken down into three cases according to the parity of $S$. Here, the exponent $S/2$ is not an integer for odd $S$, so $\sqrt[S/2]{\bar{Z}}$ is understood as the fractional power $\bar{Z}^{2/S} = \bar{Z}(2\pi/S)$ rather than an integer root.
    For $S$ odd, we already have $\gcd(S, N) = 1$ for $N = (S - 1) / 2$, so all the possible phase gates are achieved with the maximal distance.
    For $S$ even and $S / 2$ even, $\gcd(S, N) = N$ for the maximal $N = S / 2$, but $\gcd(S, N) = 1$ for $N = S / 2 - 1$, so decreasing the code distance by $1$ ensures all the phase gates are achieved.
    Finally, for $S$ even and $S / 2$ odd, $\gcd(S, N) = 2$ for $N = S / 2 - 1$, in which case half of the phase gates are achieved, so the code distance needs to be decreased by one more to realize all the phase gates.}
    \label{tab:cyclicCodesCases}
\end{table}

\subsection{Cyclic cat codes}\label{sec:cyclicCatCodes}
Our construction provides a natural generalization of rotation-symmetric cat codes when the primitive is chosen to be a coherent state $|\Theta\rangle = |\alpha\rangle$.
In this case, the codewords in Eq.\eqref{eqn:phaseStructure} become equal superpositions of rotations of $|\alpha\rangle$, and can equivalently be written as~\cite{grimsmo_quantum_2020}
\begin{equation}
    \begin{aligned}
        |0/1_{L}\rangle & = \frac{1}{\sqrt{\mathcal{N}_{0/1}}}\hat{\Pi}_{S}^{n_{0/1}}|\alpha\rangle ~,
    \end{aligned}
\end{equation}
where $\mathcal{N}_{0/1}$ are normalization constants and $\hat{\Pi}_{S}^{n} = \sum_{k = 0}^{\infty}|Sk + n\rangle\langle Sk + n|$ is the projector onto the $|n \bmod S\rangle$ Fock states.
When $n_{0} = 0$ and $n_{1} = S / 2$, we arrive at the usual rotation-symmetric cat codes.
Notably, since coherent states are eigenstates of $\hat{a}$, the error words of cat codes take on a similar form as the codewords: $\hat{a}^{m}\left(\hat{\Pi}_{S}^{n}|\alpha\rangle\right) \propto \hat{\Pi}_{S}^{n - m}|\alpha\rangle$.
Thus, adjusting $n_{0}$ and $n_{1}$ to yield more general cyclic codes amounts to replacing one of the codewords of a traditional cat code with one of its error words.

Since cyclic cat codewords are different combinations of rotation-symmetric cat codes and error words, they inherit many properties from traditional cat codes.
For instance, following the analysis in~\cite{li_cat_2017}, cyclic cat codes also have $\alpha$ ``sweetspots'' where photon loss and dephasing errors are suppressed at first order, corresponding to when $\langle 0_{L}|\hat{n}|0_{L}\rangle = \langle 1_{L}|\hat{n}|1_{L}\rangle$.
In addition, methods of state preparation can be readily generalized to cyclic cat codes by projecting onto the appropriate $|n \bmod S\rangle$ manifold.
However, difficulties arise when generalizing error detection strategies to cyclic cat codes since their code and error spaces no longer live on single $|n \bmod S / 2\rangle$ manifolds.
This means generalized parity measurements are no longer sufficient to determine the error syndromes, and more complicated POVMs are required to perform non-destructive error detection.
We discuss this limitation in Sec.~\ref{sec:errorDetection}.

The positive tradeoff is the additional nontrivial phase gates available to cyclic codes.
One example of an $(S, n_{0}, n_{1}) = (8, 0, 3)$ cyclic cat code is shown in Fig.~\ref{fig:cyclicCodesSummary}(b).
As depicted, this code has $\bar{T}$ gates achievable via phase-space rotations, yet retains nearly all of the error-correction properties of the corresponding $(8, 0, 4)$ rotation-symmetric cat code.

\subsection{Vandermonde codes}\label{sec:vand}
We now discuss the subset of cyclic codes that naturally generalize binomial codes~\cite{michael_new_2016}.
The key feature of binomial codes we seek to generalize is that they exactly correct a given number of photon losses using a finite support in Fock space.
Bosonic codes whose $|0_{L}\rangle$ and $|1_{L}\rangle$ codewords have support on disjoint Fock states -- such as binomial, rotation-symmetric, and cyclic codes -- exactly correct the first $l$ photon losses if and only if the $\hat{n}^{l'}$ moments of $|0_{L}\rangle$ and $|1_{L}\rangle$ are equal up to $\hat{n}^{l}$.
In other words,
\begin{equation}
  \begin{aligned}
    \langle 0_{L} | \hat{n}^{l'} | 0_{L} \rangle & = \langle 1_{L} | \hat{n}^{l'} | 1_{L} \rangle \hspace{10pt} \forall\,\,l' \in \{0, \dots, l\} ~, \\
  \end{aligned}
  \label{eqn:equalMoments}
\end{equation}
where the $l' = 0$ case follows from codeword normalization.
The binomial coefficients, for which binomial codes are named, arise when taking $|0_{L}\rangle$ and $|1_{L}\rangle$ as a finite-support rotation-symmetric code and solving for the Fock-grid coefficients which satisfy Eq.~\eqref{eqn:equalMoments}.
However, this derivation relies on the constant Fock spacing of rotation-symmetric codes.
For more general Fock spacings, like those of cyclic codes, we must instead employ the following theorem.
\begin{theorem}
    A code with codewords $|0_{L}\rangle$ and $|1_{L}\rangle$, having finite disjoint alternating Fock support so that
    \begin{equation}
        \begin{aligned}
            |0/1_{L}\rangle & = \sum_{k \text{ even/odd}}^{K}c_{k}|n_{k}\rangle \\
        \end{aligned}
        \label{eqn:thm1Codewords}
    \end{equation}
    for Fock-grid coefficients $c_{0}, c_{1}, \dots, c_{K}$ and ascending Fock state support $n_{0} < n_{1} < \dots < n_{K}$, satisfies Eq.~\eqref{eqn:equalMoments} up to $l = K - 1$ if and only if
    \begin{equation}
        \begin{aligned}
            |c_{k}|^{2} = \frac{1}{\prod_{i \neq k}|n_{k} - n_{i}|}
        \end{aligned}
        \label{eqn:thm1Coeffs}
    \end{equation}
    for all $k$, up to some overall scaling factor.
    \label{thm:vandermonde}
\end{theorem}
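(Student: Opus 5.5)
Set $w_k = (-1)^k |c_k|^2$ for $k = 0, \dots, K$, so that $w_k$ carries a positive sign on the $|0_L\rangle$ support and a negative sign on the $|1_L\rangle$ support. Because the two codewords have disjoint Fock support, $\langle 0_L|\hat{n}^{l'}|0_L\rangle - \langle 1_L|\hat{n}^{l'}|1_L\rangle = \sum_{k=0}^{K} w_k n_k^{l'}$. Hence the conditions of Eq.~\eqref{eqn:equalMoments} for $l' = 0, \dots, K-1$ say exactly that
\begin{equation*}
  V w = 0, \qquad V_{l' k} = n_k^{l'}, \quad l' \in \{0, \dots, K-1\}, \ k \in \{0, \dots, K\} .
\end{equation*}
Here $V$ is a $K \times (K+1)$ Vandermonde matrix. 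The plan is to identify its kernel explicitly and then compare signs with the alternating pattern.

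The first step is to show that $\ker V$ is one-dimensional. Because the $n_k$ are distinct, any $K$ columns of $V$ form a square Vandermonde matrix, which is invertible. So $V$ has rank $K$ and nullity $1$.

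The second step is to exhibit a kernel vector. I would use the Lagrange interpolation identity
\begin{equation*}
  \sum_{k=0}^{K} \frac{p(n_k)}{\prod_{i \neq k}(n_k - n_i)} = 0,
\end{equation*}
valid for every polynomial $p$ of degree at most $K-1$. It holds because the left side is the leading ($x^K$) coefficient of the degree-$\le K$ interpolant of $p$ at the $K+1$ nodes, and that interpolant is $p$ itself. Applying it to $p(x) = x^{l'}$ shows that $u_k = 1/\prod_{i \neq k}(n_k - n_i)$ spans $\ker V$.

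The third step, which is the only delicate point, is the sign bookkeeping. Since $n_0 < \dots < n_K$, the product $\prod_{i\neq k}(n_k - n_i)$ contains exactly $K-k$ negative factors, namely those with $i > k$. So $\operatorname{sgn} u_k = (-1)^{K-k}$ and $u_k = (-1)^{K-k}/\prod_{i\neq k}|n_k - n_i|$. Thus $u$ alternates in sign, matching the alternation of $w$.

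Now the two directions. For necessity, $Vw = 0$ forces $w = \lambda u$ for some $\lambda$. Since $w \neq 0$ (the codewords are normalized), $\lambda \neq 0$. Taking absolute values gives $|c_k|^2 = |\lambda|/\prod_{i \neq k}|n_k - n_i|$, which is Eq.~\eqref{eqn:thm1Coeffs} up to the overall factor $|\lambda|$. For sufficiency, suppose $|c_k|^2 = \mu/\prod_{i\neq k}|n_k-n_i|$ with $\mu > 0$. Then $w_k = (-1)^k \mu (-1)^{K-k} u_k = (-1)^K \mu\, u_k$, so $w$ is a multiple of $u$, $Vw = 0$, and the moments agree for $l' = 0, \dots, K-1$. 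The $l' = 0$ equation gives equal norms for the two codewords, so one common rescaling normalizes both. Only the moduli $|c_k|$ are constrained; the phases of the $c_k$ are free.
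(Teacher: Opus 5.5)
Your proof is correct and follows the same route as the paper. Both arguments reduce the equal-moment conditions to $V\vec{b}=0$ for the $K\times(K+1)$ Vandermonde matrix, show the kernel is one-dimensional and spanned by $u_k = 1/\prod_{i\neq k}(n_k-n_i)$, and match its alternating signs to $(-1)^k|c_k|^2$. The only difference is that you verify the kernel vector directly through the Lagrange interpolation identity, whereas the paper cites the explicit formula for the last column of $(V')^{-1}$; these are the same fact, since that column consists of the leading coefficients of the Lagrange basis polynomials.
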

\begin{proof}
    Plugging the codewords from Eq.~\eqref{eqn:thm1Codewords} into Eq.~\eqref{eqn:equalMoments}, the equal-moments conditions become
    \begin{equation}
      \begin{aligned}
        \sum_{k \text{ even}}^{K}c_{k}^{*}n_{k}^{l'}c_{k} & = \sum_{k \text{ odd}}^{K}c_{k}^{*}n_{k}^{l'}c_{k}  \\
        \iff \sum_{k = 0}^{K}(-1)^{k}n_{k}^{l'}|c_{k}|^{2} & = 0 ~,
      \end{aligned}
      \label{eqn:squaredCoeffsCond}
    \end{equation}
    for all $l' \in \{0, \dots, l\}$.
    These equations are linear in the squared coefficients $|c_{k}|^{2}$, and can be written in matrix form as $V\vec{b} = \vec{0}$, where $\vec{b} = \begin{pmatrix}|c_{0}|^{2} & -|c_{1}|^{2} & |c_{2}|^{2} \dots & (-1)^{K}|c_{K}|^{2}\end{pmatrix}^{\top}$ is the vector of alternating-sign squared coefficients and $V$ is the $(l + 1) \times (K + 1)$ Vandermonde matrix~\cite{macon_inverses_1958}
    \begin{equation}
        \begin{aligned}
            V = \begin{pmatrix}
                1 & 1 & 1 & \cdots & 1 \\
                n_{0} & n_{1} & n_{2} & \cdots & n_{K} \\
                n_{0}^{2} & n_{1}^{2} & n_{2}^{2} & \cdots & n_{K}^{2} \\
                \vdots & \vdots & \vdots & \ddots & \vdots \\
                n_{0}^{l} & n_{1}^{l} & n_{2}^{l} & \cdots & n_{K}^{l}
            \end{pmatrix} ~.
        \end{aligned}
        \label{eqn:vandermondeMat}
    \end{equation}
    When $l = K - 1$, we can construct a square Vandermonde matrix $V'$ from $V$ by adding one extra row of $n_{k}^{l + 1}$ powers.
    Since the $n_{k}$ are distinct, $V'$ is known to be invertible~\cite{macon_inverses_1958}, and by definition, the last column of $(V')^{-1}$ is orthogonal to all but the last row of $V'$.
    Therefore, the kernel of $V$, and thus the set of $\vec{b}$ satisfying $V\vec{b} = \vec{0}$, is spanned by the last column of $(V')^{-1}$, which is known to be given by the explicit formula~\cite{macon_inverses_1958}
    \begin{equation}
        \begin{aligned}
            (V')^{-1}_{k, K} = \prod_{i \neq k}\frac{1}{n_{k} - n_{i}} ~.
        \end{aligned}
    \end{equation}
    Since the $n_{k}$ are increasing, $(V')^{-1}_{k, K}$ alternates in sign so that any solution $\vec{b}$ has coefficients
    \begin{equation}
        \begin{aligned}
            b_{k} = \prod_{i \neq k}\frac{(-1)^{k}}{|n_{k} - n_{i}|} ~,
        \end{aligned}
        \label{eqn:bkCoeffs}
    \end{equation}
    up to some overall scaling factor.
    Plugging in our definition $b_{k} = (-1)^{k}|c_{k}|^{2}$ then concludes the proof.
\end{proof}

Using this theorem, we define Vandermonde codes for a given Fock space structure $n_{k}$ and coefficient number $K > 0 \in \mathbb{N}$ as the codes of the form Eq.~\eqref{eqn:thm1Codewords} with coefficients given by Eq.~\eqref{eqn:thm1Coeffs}, arising from the kernel of the Vandermonde matrix in Eq.~\eqref{eqn:vandermondeMat} for which these codes are named.
We usually only focus on cyclic Vandermonde codes, which have periodic Fock support $n_{k \text{ even/odd}} = kS + n_{0/1}$, but Theorem~\ref{thm:vandermonde} applies more generally to any code with disjoint and alternating Fock support.
When the Fock support is evenly spaced so that $n_{k} = kN$, the coefficients in Eq.~\eqref{eqn:thm1Coeffs} are proportional to binomial coefficients (see Appendix~\ref{app:vandCoeffs}), recovering the usual binomial codes.

From the proof of Theorem~\ref{thm:vandermonde}, we see that for a cyclic code to exactly correct the first $l$ photon losses, it must occupy at least $K + 1 = l + 2$ Fock states, since when $K < l + 1$, the Vandermonde matrix in Eq.~\eqref{eqn:vandermondeMat} has trivial kernel and thus there is no solution to Eq.~\eqref{eqn:equalMoments}.
In this way, Vandermonde codes minimize the number of occupied Fock states required to exactly correct for photon loss, at least among codes having codewords with disjoint Fock support.
When more than $l + 2$ Fock states are used, the kernel of the Vandermonde matrix becomes higher dimensional so that there are more solutions than Eq.~\eqref{eqn:thm1Coeffs}.
Analogous to the generalized binomial codes in~\cite{wetherbee_mathematical_2025}, these additional solutions can be constructed from the coefficients of smaller Vandermonde codes.

\section{$SU(2)$ rotation gates}\label{sec:rotGates}
Codes with finite Fock support (e.g., Vandermonde codes) could alternatively be implemented in finite-dimensional spin systems~\cite{yu_schrodinger_2025}, in which case the natural operations are extended from $U(1)$ phase-space rotations to full $SU(2)$ rotations.
In addition, several recent works have shown that synthetic $SU(2)$ spin rotations can be engineered inside a cavity~\cite{roy_synthetic_2025,champion_efficient_2025}, further encouraging us to consider what logical gates can be achieved for bosonic cyclic codes via $SU(2)$ rotations.
In particular, we are interested in finite-support cyclic codes that, when mapped to a $d$-dimensional spin system, are covariant with respect to some subgroup $G \subset SU(2)$.

By \emph{covariant}, we mean that for each $g \in G$, the physical rotation $\lambda(g)$ accomplishes the corresponding logical operation $\mu(g)$, where $\lambda$ is the $d$-dimensional spin representation of $G$ and $\mu$ is some $2$-dimensional representation of $G$~\cite{denys_quantum_2024}.
This problem has already been more or less solved in~\cite{gross_designing_2021} -- specifically for $G$ the binary tetrahedral, binary octahedral ($2O$), and binary icosahedral groups -- but here we emphasize connections to bosonic cyclic codes by additionally considering the dihedral groups and noting that the smallest binomial code in fact realizes an unfaithful irrep of the group $G \cong 2O$.

By definition (Eq.~\ref{eqn:rotSymmetry}), bosonic cyclic codes are already covariant with respect to the cyclic group $G = C_{S} = \{e, g, g^{2}, \dots, g^{S - 1}\}$ for any $d$-dimensional subspace that completely contains the code.
In this case, the $\lambda : C_{S} \to SU(d)$ `spin' representation is given by $\lambda(g^{k}) = \hat{R}_{S}^{k}$.
The $\mu : C_{S} \to SU(2)$ representation depends on the placement of the Fock states according to $\mu(g^{k}) = \bar{Z}\left(\frac{2\pi kN}{S}\right)$, where $N = (n_{1} - n_{0}) \bmod S$, and is faithful exactly when $N$ and $S$ are coprime.
This is another way to understand the discussion in Sec.~\ref{sec:errorGates}.
In fact, we could have originally defined (finite-support) order-$S$ bosonic cyclic codes as codes that are covariant to the cyclic group $C_{S}$, rotated appropriately. 

\subsection{Bosonic dihedral codes}\label{sec:dihedral}
A cyclic group is upgraded to a dihedral group\footnote{These are called binary dihedral groups when the total spin \(j\) is half-integer, but this distinction is not relevant to the encodings since representations of the two groups are the same up to global phase.}
by adding a second involutive group generator $h$ which accomplishes a physical $\lambda(h) = \hat{X}_{d} = \exp\left(-i\pi \hat{J}_{x}^{(d)}\right)$ rotation, where $\hat{J}_{x}^{(d)}$ is the angular momentum operator in the $x$ direction for spin $j = \frac{d - 1}{2}$.
Thus, a bosonic cyclic code is upgraded to a bosonic dihedral code if there is some spin dimension $d$ such that $\hat{X}_{d}$ acts as a logical gate.
By the commutation relations of the group generators $g$ and $h$ (i.e. $\hat{R}_{S}$ and $\hat{X}_{d}$), the logical gate $\mu(h)$ must either be a $\pi$ rotation about an equatorial axis of the Bloch sphere, or else be identity; so $\mu(h) = \bar{X}$ or $\bar{I}$, up to a change of logical basis. 

From the symmetry property $d_{m', m}^{j}(\pi) = (-1)^{j - m}\delta_{m', -m}$ of the small Wigner $D$-matrix~\cite{wigner_gruppentheorie_1931}, the action of the physical rotation $\hat{X}_{d}$ in the Fock basis is given by $\hat{X}_{d}|n\rangle \propto |(d - 1) - n\rangle$, up to an irrelevant $d$-dependent phase.
This operation, depicted in Fig.~\ref{fig:su2Rots}, serves to take the first $d$ Fock states and flip them about their midpoint.
Thus, for $\hat{X}_{d}$ to map the codespace to itself, the Fock-space structure of the code must be invariant under this flip symmetry.
The Fock-grid coefficients must similarly be symmetric about the midpoint index $k = \frac{K}{2}$: $c_{k} = c_{K - k}$ for all $k \in \{0, \dots, K\}$, where $c_{k} = 0$ for $k > K$ so that $K$ is the coefficient index of the highest occupied Fock state, lower than $|d\rangle$, in the code Fock structure.
When these conditions are satisfied, $\hat{X}_{d}$ acts as a logical $\bar{X}$ gate or stabilizer on the codespace.

\begin{figure*}[ht]
    \centering
    \includegraphics[width=\textwidth]{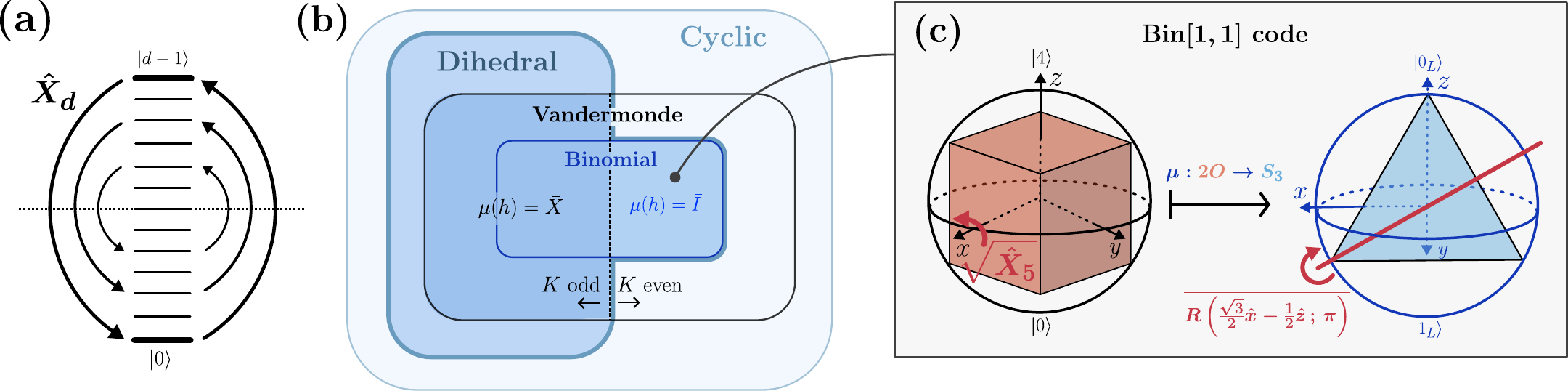}
    \caption{\textbf{Dihedral codes and $\boldsymbol{SU(2)}$ symmetry. (a)} Action of the $\hat{X}_{d} = \exp\left(-i\pi\hat{J}_{x}^{(d)}\right)$ $d$-dimensional spin rotation on the first $d$ Fock states.
    \textbf{(b)} Diagram of the relationship between cyclic, dihedral, Vandermonde, and binomial codes, with the dotted line dividing the latter two codes according to the parity of $K$.
    As indicated, the logical gate $\mu(h)$ corresponding to the extra dihedral group generator $h$ is $\mu(h) = \bar{X}$ for odd $K$ Vandermonde codes and $\mu(h) = \bar{I}$ for even $K$ binomial codes.
    Even $K$ non-binomial Vandermonde codes are not dihedral, so $\mu(h)$ does not have a well-defined logical action in this case.
    \textbf{(c)} Depiction of the octahedral symmetry of the $\text{Bin}[1, 1]$ code for spin dimension $d = 5$.
    The image of the unfaithful logical representation $\mu$ is isomorphic to $\mu(2O) \cong S_{3}$, the symmetry group of the equilateral triangle, and is specifically given by the Bloch sphere rotations that leave the depicted embedded triangle invariant.
    The rotations $\lambda\left(R\left(\hat{x}\,;\,\frac{\pi}{2}\right)\right) = \sqrt{\hat{X}_{5}}$ and $\mu\left(R\left(\hat{x}\,;\,\frac{\pi}{2}\right)\right) = \overline{R\left(\frac{\sqrt{3}}{2}\hat{x} - \frac{1}{2}\hat{z}\,;\,\pi\right)}$ are depicted on the physical dimension-$5$ spin sphere (left) and logical Bloch sphere (right), respectively.}
    \label{fig:su2Rots}
\end{figure*}

Starting with order-$N$ rotation-symmetric codes~\cite{grimsmo_quantum_2020}, the Fock structure condition is always satisfied as long as we choose the spin dimension to be $d = d' + n_{0} = KN + 2n_{0} + 1$, where $d'$ is the smallest spin dimension that fits the code and $c_{K}$ is the highest nonzero coefficient of the code.
Since the binomial distribution is symmetric about its center, binomial codes also always satisfy the Fock-grid coefficient condition.
Thus, all binomial codes are also dihedral codes, having the $SU(2)$ rotation $\hat{X}_{d}$ as a $\bar{X}$ gate or stabilizer.
Specifically, for binomial codes with even $K$, so that $|0_{L}\rangle$ has support on the first and last occupied Fock states, $\hat{X}_{d}$ acts as a stabilizer, sending $|0_{L}\rangle$ and $|1_{L}\rangle$ to themselves.
For odd $K$, $|0_{L}\rangle$ has support on the first occupied Fock state and $|1_{L}\rangle$ has support on the last occupied Fock state, so $\hat{X}_{d}$ instead acts as a logical $\bar{X}$ gate.
In other words, $\mu(h) = (\bar{X})^{K}$ for binomial codes.

Turning to general order-$S$ cyclic codes, the Fock-space structure condition is again always satisfied for odd $K$ as long as we choose spin dimension $d = d' + n_{0} = \frac{1}{2}(K - 1)S + n_{0} + n_{1} + 1$ (see Appendix~\ref{app:vandCoeffs}).
However, for even $K$, it is only satisfied if the code is also rotation-symmetric.
To see this, recall that $\hat{X}_{d}$ acts on the spin-$d$ irrep as the Fock-space reflection $|n\rangle \mapsto |d - 1 - n\rangle$, so the Fock-structure condition requires the code's set of occupied Fock states to be invariant under this reflection. For odd $K$, each logical word has an odd number $K$ of occupied rungs and hence a central rung, so the choice of $d$ above centers the grid and makes its support reflection-symmetric. For even $K$ there is no central rung, and the reflection can only preserve the combined support by mapping the $\{n_{0} + jS\}$ and $\{n_{1} + jS\}$ sublattices of $|0_{L}\rangle$ and $|1_{L}\rangle$ into one another; this requires the two sublattices to be evenly interleaved, $n_{1} - n_{0} = S/2$, which is precisely the rotation-symmetric condition.
Similar to binomial codes, Vandermonde codes with odd $K$ have symmetric coefficients (see Appendix~\ref{app:vandCoeffs}) and thus always satisfy the Fock-grid coefficient condition.
We therefore conclude that all odd $K$ Vandermonde codes are dihedral, with $\mu(h) = \bar{X}$ so that $\hat{X}_{d}$ always acts as a $\bar{X}$ gate.

A summary of how these various cases (e.g., binomial vs. Vandermonde, even $K$ vs. odd $K$) determine whether the code is dihedral and dictate the representation $\mu(h)$ is depicted in Fig.~\ref{fig:su2Rots}(b).

\subsection{Octahedral symmetry of $\text{Bin}[1, 1]$ code}
The $\text{Bin}[1, 1]$ `kitten' code (written using the original parametrization from~\cite{michael_new_2016}) is the smallest, and most common, binomial code, given by:
\begin{equation}
    \begin{aligned}
        |0_{L}\rangle & = \frac{1}{\sqrt{2}}\left(|0\rangle + |4\rangle\right) ~, \\
        |1_{L}\rangle & = |2\rangle ~.
    \end{aligned}
    \label{eqn:bin11}
\end{equation}
In our language, this is a $(S, n_{0}, n_{1}) = (4, 0, 2)$ cyclic (and Vandermonde) code with coefficients $c_{0} = c_{2} = \frac{1}{\sqrt{2}}$, $c_{1} = 1$ and maximum coefficient index $K = 2$.
From Sec.~\ref{sec:dihedral}, we know that the $\text{Bin}[1, 1]$ code is a dihedral code for spin dimension $d = 5$ (i.e. spin $j = 2$) with $\hat{X}_{5}$ a stabilizer.
However, it turns out that this code has further spin-$2$ $SU(2)$ symmetry.

We can see this most directly by writing the spin-$2$ rotation $\sqrt{\hat{X}_{5}} = \exp\left(-i\frac{\pi}{2}\hat{J}_{x}^{(5)}\right)$ explicitly using the Wigner $D$-matrix as~\cite{eden_computer_2003}
\begin{equation*}
    \begin{aligned}
        \sqrt{\hat{X}_{5}} & = \frac{1}{4}\begin{pmatrix} 1 & -2i & -\sqrt{6} & 2i & 1 \\ -2i & -2 & 0 & -2 & 2i \\ -\sqrt{6} & 0 & -2 & 0 &  -\sqrt{6} \\ 2i & -2 & 0 & -2 & -2i \\ 1 & 2i & -\sqrt{6} & -2i & 1 \end{pmatrix}.
    \end{aligned}
\end{equation*}
Applying this to the logical states from Eq.~\eqref{eqn:bin11}, we find
\begin{equation}
    \begin{aligned}
        \sqrt{\hat{X}_{5}}|0_{L}\rangle & = \frac{1}{2}|0_{L}\rangle - \frac{\sqrt{3}}{2}|1_{L}\rangle ~, \\
        \sqrt{\hat{X}_{5}}|1_{L}\rangle & = -\frac{\sqrt{3}}{2}|0_{L}\rangle - \frac{1}{2}|1_{L}\rangle ~,
    \end{aligned}
\end{equation}
revealing that $\sqrt{\hat{X}_{5}}$ acts as a logical gate corresponding to the Bloch sphere rotation of $\pi$ around the $(\theta, \phi) = (2\pi / 3, 0)$ axis, $\overline{R\left(\frac{\sqrt{3}}{2}\hat{x} - \frac{1}{2}\hat{z}\,;\,\pi\right)}$.

Together with the usual $\hat{R}_{4} \mapsto \bar{Z}$ gate of the $\text{Bin}[1, 1]$ code, this rotation generates the spin-$2$ representation $\lambda$ of the octahedral group $G = 2O$ (which, since spin $2$ is an integer-spin irrep, acts through the quotient $2O/\{\pm I\} \cong O \cong S_{4}$), with the corresponding logical gates $\mu$ constituting the symmetries of a triangle inscribed in the Bloch sphere, as depicted in Fig.~\ref{fig:su2Rots}(c).
This is, in fact, precisely the unfaithful two-dimensional irreducible representation of $2O$ mentioned in passing in the supplemental of~\cite{gross_designing_2021}.
The realization that the commonly implemented $\text{Bin}[1, 1]$ code harbors this unfaithful representation of $2O$ is, to the best of our knowledge, novel, and implies this code has non-Clifford triangle-symmetry gates given by $SU(2)$ rotations.
The code is a particular embedding of a more general ``intrinsic'' code into the spin space~\cite{kubischta_intrinsic_2025}, and we clarify its explicit gate set.
This might be particularly attractive for implementations of this code in true big spin systems having native $SU(2)$ operations.

\section{Error detection}\label{sec:errorDetection}
As illustrated in Fig.~\ref{fig:cyclicCodesSummary}, bosonic cyclic codes enable the tradeoff of phase-space rotation stabilizers for phase gates.
Conceptually, this loss of stabilizers is accompanied by a loss of ways to extract error syndromes, so we generically expect error detection to be more difficult.
Indeed, unlike rotation-symmetric codes, cyclic code codespaces are not necessarily contained in a single parity manifold in Fock space, precluding the use of generalized parity measurements for error detection.
Nevertheless, we find that by combining Ramsey interferometry measurement (RIM)~\cite{sun_tracking_2014} with selective parity manifold driving via frequency comb control~\cite{ni_beating_2023} of an ancilla qubit, we can non-destructively detect several photon losses in small Vandermonde codes.

\begin{figure*}[ht]
    \centering
    \includegraphics[width=\textwidth]{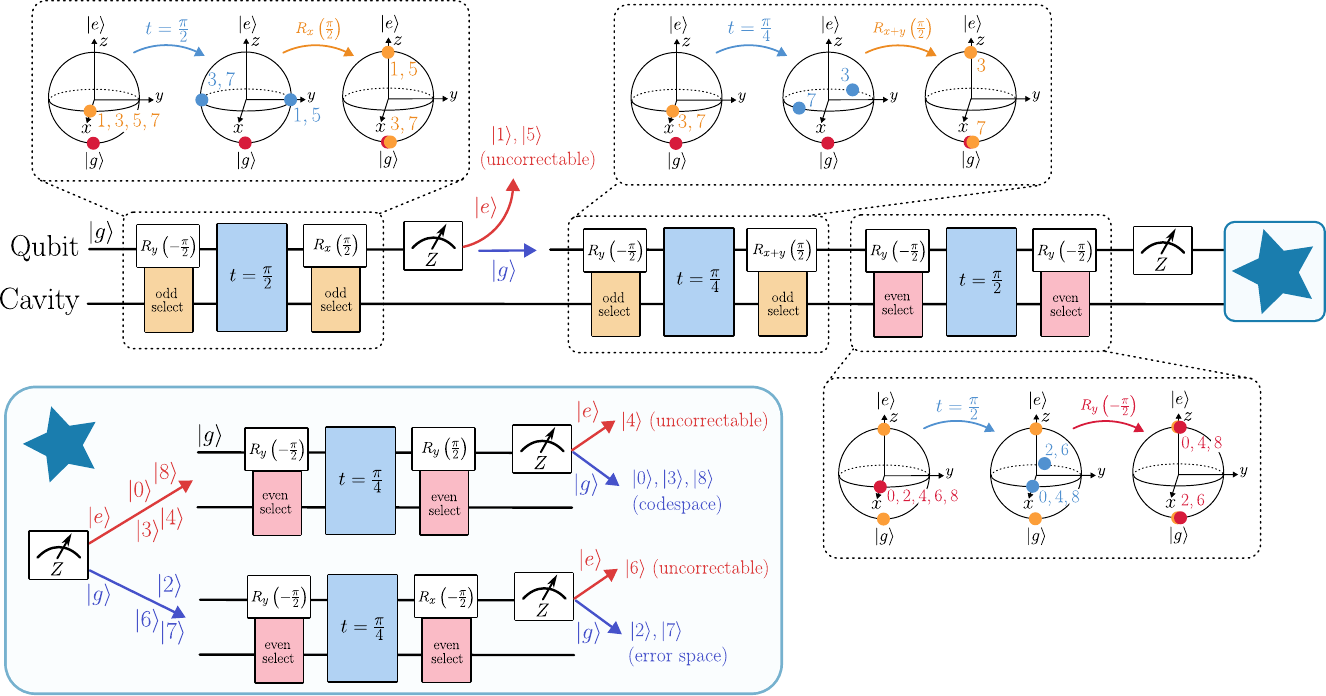}
    \caption{\textbf{Error detection protocol for finite-support $(8, 0, 3)$ code with $K = 2$.}
    The protocol involves a dispersively coupled qubit-cavity system, with the qubit starting in its ground state $|g\rangle$.
    A sequence of adaptive RIMs are then performed, using a combination of idle evolutions $U = \exp\left(it\hat{n}|e\rangle\langle e|\right)$ (blue) and selective qubit rotations for the even (red) and odd (yellow) Fock parity manifolds.
    The evolution of the relevant Fock states around the ancilla Bloch sphere during the first two RIM steps are shown, with the ket notation dropped (i.e. $|j\rangle \to j$).
    A measurement of $|e\rangle$ after the first RIM step implies occupation of the Fock state manifold $\{|1\rangle, |5\rangle\}$, corresponding to an uncorrectable error.
    Following instead a measurement of $|g\rangle$, the second RIM step then distinguishes the Fock state manifolds $\{|0\rangle, |3\rangle, |4\rangle, |8\rangle\}$ and $\{|2\rangle, |6\rangle, |7\rangle\}$, which are then further broken down in the final RIM step into an uncorrectable error $\{|4\rangle\}$ and the codespace $\{|0\rangle, |3\rangle, |8\rangle\}$, and an uncorrectable error $\{|6\rangle\}$ and the error space $\{|2\rangle, |7\rangle\}$, respectively.
    Altogether, this accomplishes a POVM on the relevant Fock states that non-destructively detects if the codespace, error space, or an uncorrectable error space is occupied.
    Note that, as presented, in the second RIM step, the $|3\rangle$ state picks up a phase $e^{i\pi / 4}$ relative to the $|0\rangle$ and $|8\rangle$ states, which can be corrected by adding a $t = \frac{5\pi}{4}$ idle evolution $U = \exp\left(i\frac{5\pi}{4}\hat{n}|e\rangle\langle e|\right)$ before the qubit is reset for the third RIM step.
    }
    \label{fig:errorDetection}
\end{figure*}

Here, we derive the error-detection protocol for the $(S, n_{0}, n_{1}) = (8, 0, 3)$ Vandermonde code with $K = 2$ coefficients, which can correct one photon loss and has highest occupied Fock state $|8\rangle$.
The protocol is illustrated in Fig.~\ref{fig:errorDetection}, and involves achieving the POVM with elements: $\{|0\rangle\langle 0| + |3\rangle\langle 3| + |8\rangle\langle 8|, |2\rangle\langle 2| + |7\rangle\langle 7|, |1\rangle\langle 1| + |5\rangle\langle 5|, |4\rangle\langle 4|, |6\rangle\langle 6|\}$, which successfully distinguishes the codespace and error space from the other potentially occupied Fock states.
This can be accomplished by composing three types of operations on a dispersively coupled qubit-cavity system that houses the bosonic cyclic code in the cavity.
The first type of operation, depicted in blue, is (1) idle evolution of the dispersively coupled system, enacting the unitary $U = \exp\left(it\hat{n}|e\rangle\langle e|\right)$ for some time parameter $t$, where $|e\rangle$ is the excited state of the ancilla qubit.
The other two types of operations, depicted in red and yellow respectively, are selective equatorial rotations of the qubit for only the (2) even or (3) odd Fock states, which can be accomplished, for example, using the frequency comb scheme from Ref.~\cite{ni_beating_2023}.
This frequency comb allows for a faster Rabi rate by including a large number of symmetric detunings around the resonant frequencies of the occupied Fock states, which restricts its use to small codes with finite Fock space support.
For more details, see Appendix.~\ref{app:errorDetection}.

Using these three types of operations, we can perform separate RIMs in the even and odd Fock parity manifolds before measuring the qubit, which is key to building up more complicated POVMs.
In particular, we can perform a sequence of adaptive RIMs, similar to~\cite{jin_general_2025}, where in each step the qubit ends in $|e\rangle$ for a certain subset $\{|m_{1}\rangle, \dots, |m_{J}\rangle\} \subset \{|0\rangle, \dots, |8\rangle\}$ of the relevant Fock states, and ends in $|g\rangle$ otherwise, so that measuring the qubit accomplishes the POVM element $\sum_{j = 1}^{J}|m_{j}\rangle\langle m_{j}|$.
Using the specific adaptive RIM sequence detailed in Fig.~\ref{fig:errorDetection}, we can accomplish the desired POVM.

We give similar derivations for the $(5, 0, 2)$, $K = 2$ Vandermonde code and the $(8, 0, 3)$, $K = 3$ Vandermonde code in Appendix.~\ref{app:errorDetection}.
While it appears that the precise protocol and pulse sequence must be worked out separately for each cyclic code, we expect the strategies used for these three codes to yield corresponding protocols for other small Vandermonde codes.
In addition, while these error-detection protocols do not work exactly for cyclic cat codes due to their infinite Fock support, they could be applied to cat codes with small $\alpha$ for which the higher Fock states are exponentially suppressed.

\section{Stabilizer-to-gate paradigm and extension to multimode bosonic codes}
The way bosonic cyclic codes generalize rotation-symmetric codes by converting the order-$N$ stabilizer $\hat{R}_{N}$ to a phase gate suggests a general paradigm for introducing gate-protection tradeoffs into codes with higher-order stabilizers.
In bosonic cyclic codes, codewords are given by weighted superpositions of rotated primitive states with complex amplitudes.
Moving beyond equal-weight superpositions enables the construction of codes that admit simple implementations of non-Clifford gates, a feature that has also been explored in continuous-variable (CV) systems defined on two-dimensional manifolds \cite{wxl_tessellation}.
We have seen that useful weighted superposition codes can be constructed by starting with a parent code having equal-weight superpositions (e.g., a cat code) and converting a stabilizer into a gate by replacing a codeword with an error word.
In this section, we discuss this general code construction procedure and then apply it by extending an already-established two-mode bosonic code.

Consider a code $\mathcal{C}$, with logical states $|0_{L}\rangle$ and $|1_{L}\rangle$, that detects an error set $\mathcal{E} = \{E_{1}, E_{2}, \dots, E_{N - 1}\}$ via an order-$N$ modular stabilizer $\hat{S}_{N}$, meaning $\hat{S}_{N}^{N} = I$ and the eigenvalues (i.e., syndromes) of $\hat{S}_{N}$ label the error spaces of $E_{1}, \dots, E_{N - 1}$.
In other words, $\hat{S}_{N}|\mu_{L}\rangle = |\mu_{L}\rangle$ and $\hat{S}_{N}E_{k}|\mu_{L}\rangle = e^{2\pi i k / N}E_{k}|\mu_{L}\rangle$ for all $\mu \in \{0, 1\}$ and $k \in \{1, \dots, N - 1\}$, where we generally order the $E_{k}$ according to their $N$-th root of unity syndrome $e^{2\pi ik / N}$.
We can then define a new code $\mathcal{C}'$ by $|0_{L}'\rangle = |0_{L}\rangle$ and $|1_{L}'\rangle \propto E_{k}|1_{L}\rangle$ for some $E_{k} \in \mathcal{E}$, so that $\hat{S}_{N}$ is converted from a stabilizer to a logical $\bar{Z}\left(\frac{2\pi k}{N}\right)$ gate.
Many of the Knill-Laflamme conditions of $\mathcal{C}$ with respect to $\mathcal{E}$ transfer over to $\mathcal{C}'$, so $\mathcal{C}'$ often still detects some non-trivial subset $\mathcal{E}' \subset \mathcal{E}$ of the error set.

In the bosonic cyclic code case, $\mathcal{C}$ can be any order-$N$ rotation-symmetric code, in any computational basis $|0_{L}\rangle$ and $|1_{L}\rangle$, which corrects $\mathcal{E} = \{\hat{a}^{N - 1}, \hat{a}^{N - 2}, \dots, \hat{a}\}$ with corresponding stabilizer $\hat{S}_{N} = \hat{R}_{N} = \exp\left(i\frac{2\pi}{N}\hat{n}\right)$.
Defining new codewords $|0_{L}'\rangle = |0_{L}\rangle$ and $|1_{L}'\rangle \propto \hat{a}|1_{L}\rangle$, we arrive at an order-$N$\footnote{If $|0_{L}\rangle$ and $|1_{L}\rangle$ are the standard computational basis of the order-$N$ rotation-symmetric code $\mathcal{C}$, then the resulting bosonic cyclic code instead has order $2N$ due to the logical $\bar{Z}$ gate action of $\hat{R}_{2N}$ on $\mathcal{C}$ in this basis.} bosonic cyclic code $\mathcal{C}'$ which still detects $\mathcal{E}' = \{\hat{a}^{N - 2}, \dots, \hat{a}\}$ and has $\hat{R}_{N}$ acting as a logical $\bar{Z}\left(\frac{2\pi(N - 1)}{N}\right)$ gate.
Indeed, cyclic cat codes can be defined from regular cat codes precisely in this way, as discussed in Sec.~\ref{sec:cyclicCatCodes}.
Constructing Vandermonde codes from binomial codes in this way also requires adjusting the Fock-grid coefficients, as derived in Sec.~\ref{sec:vand}.
This is a consequence of the fact that $\mathcal{E}'$ will generally not be correctable by $\mathcal{C}'$, even if $\mathcal{E}$ is correctable by $\mathcal{C}$, unless some code parameters are also adjusted.

We now apply this paradigm to the two-mode $\mathbb{Z}_{5}$ simplex code introduced in Ref.~\cite{jain_quantum_2024}.
Here we describe this code and its stabilizer structure at the most basic level; for a more detailed description, see Ref.~\cite{jain_quantum_2024} or Appendix~\ref{app:simplex}.
The two-mode $\mathbb{Z}_{5}$ simplex code $\mathcal{C}$ can be defined in the Fock basis, up to normalization, as
\begin{equation*}
    \begin{aligned}
        |0_{L}\rangle & \propto \sum\limits_{\begin{array}{c} \scriptstyle n, m \geq 0 \\ \scriptstyle (n + 2m) \bmod 5 = 0 \\ \scriptstyle n + m \text{ is even} \end{array}}\frac{\alpha^{n + m}}{\sqrt{n!m!}}|n, m\rangle ~, \\
        |1_{L}\rangle & \propto \sum\limits_{\begin{array}{c} \scriptstyle n, m \geq 0 \\ \scriptstyle (n + 2m) \bmod 5 = 0 \\ \scriptstyle n + m \text{ is odd} \end{array}}\frac{\alpha^{n + m}}{\sqrt{n!m!}}|n, m\rangle ~,
    \end{aligned}
\end{equation*}
for some $\alpha \in \mathbb{C}$.
This code is stabilized by $\hat{S}_{5} = (e^{i 2\pi/5})^{\hat{a}_{1}^{\dagger}\hat{a}_{1} + 2\hat{a}_{2}^{\dagger}\hat{a}_{2}}$, whose syndromes $\alpha e^{2\pi i k / 5}$ for $k \in \{1, \dots, 4\}$ allow the correction of the error set $\mathcal{E} = \{\hat{a}_{2}^{2}, \hat{a}_{1}\hat{a}_{2}, \hat{a}_{2}, \hat{a}_{1}\}$.
Following the discussion above, we then define a new code $\mathcal{C}'$ via $|0_{L}'\rangle = |0_{L}\rangle$ and $|1_{L}'\rangle \propto \hat{a}_{2}|1_{L}\rangle$.
This new code has a logical $\bar{Z}\left(\frac{6\pi}{5}\right)$ gate given by the previous stabilizer $\hat{S}_{5}$ -- a Gaussian operation that generates the full order-5 group of $\bar{Z}(2\pi/5)$ phase gates -- and still detects single photon losses in either mode, $\mathcal{E}' = \{\hat{a}_{1}, \hat{a}_{2}\}$.
We discuss this example, as well as another from the multimode bosonic context, in more detail in Appendix.~\ref{app:multimode}.

In this way, we see how the general stabilizer-to-gate paradigm can be applied to existing codes to construct new codes that have additional easily-implementable gates while retaining some error-protection properties, thus leveraging the gate-protection tradeoff.

\section{Discussion}
We have introduced bosonic cyclic codes, a generalization of bosonic rotation-symmetric codes which are constructed by converting phase-space rotation stabilizers to logical phase gates, including non-Clifford logical phase gates.
The cyclic codes can retain most of their protection against photon loss, and the logical phase gates are relatively easy to implement.
Our results show that many desirable properties of established encodings are preserved under this generalization.
Specifically, cyclic cat codes inherit the exactly-correcting sweetspots and state-preparation techniques of their rotation-symmetric predecessors~\cite{li_cat_2017}, while Vandermonde codes generalize binomial codes~\cite{michael_new_2016} to exactly correct photon loss, even when the code's Fock-state support is not evenly spaced.
In addition to enriching the theoretical structure of rotation-symmetric code families, these generalized codes could be effective when integrated into existing bosonic quantum computing architectures~\cite{liu_hybrid_2026} for use where a high-duty-cycle of operations is expected.
For instance, a computing system could utilize conventional rotation-symmetric codes for stable quantum memory and switch to cyclic variants when rapid, sequence-heavy gates are required.

Especially relevant for true big spin implementations~\cite{yu_schrodinger_2025}, we also discussed the larger $SU(2)$ symmetry of the finite-support bosonic cyclic codes.
We found logical $\bar{X}$ gates for a subset of the Vandermonde codes and revealed a previously unappreciated octahedral symmetry in the $\text{Bin}[1, 1]$ `kitten' code, which endows it with non-Clifford gates given by $SU(2)$ rotations.

We next turned to error detection, which becomes more difficult with general bosonic cyclic codes due to the codewords no longer necessarily occupying a single parity subspace.
Despite this difficulty, we showed that any number of photon losses can still be detected in small Vandermonde codes when nested Ramsey interferometry measurement~\cite{sun_tracking_2014,jin_general_2025} is combined with selective parity manifold driving~\cite{ni_beating_2023}.
While we only presented protocols for three specific finite-support codes, the novel combination of techniques we used to produce the necessary POVMs should work similarly for many small bosonic codes.
These explicit POVM constructions could also potentially have use in other contexts beyond error detection.
Future works in this direction could involve combining our error detection methods with optimal control to produce faster and higher-fidelity recovery operations.
Alternatively, recent work on nonlinear reservoir engineering~\cite{rojkov_stabilization_2026} could be applied to the cyclic code case to explore the possibility of stabilizing these more general cat-state manifolds.

Finally, we expanded the relationship between bosonic cyclic and rotation-symmetric codes into a general paradigm for converting higher-order stabilizers of codes to easily implementable gates, while retaining most of their error protection properties.
We then applied this paradigm to the two-mode simplex code of Ref.~\cite{jain_quantum_2024}, converting it to an analogous code which has an additional Gaussian logical $\bar{Z}\left(\frac{2\pi}{5}\right)$ gate while still detecting single photon losses.
This paradigm could be applied to other established multimode codes, and perhaps even to non-bosonic codes, as long as they have a higher-order stabilizer that labels some error set.
The paradigm could also be used to guide the construction of new codes that readily take advantage of this theoretical structure.
Exploring this stabilizer-to-gate paradigm more generally could be one promising avenue of further investigation.

Our work explores a new explicit tradeoff between error protection and logical control in bosonic codes, introducing practical considerations for the design and selection of error-correcting codes.
For example, consider bosonic codes that correct one photon loss.
A standard choice would be the 0-2-4 kitten code ($(S, n_{0}, n_{1}) = (4, 0, 2)$ with $K = 2$; codewords given in Eq.~\eqref{eqn:bin11}), which corrects one photon loss and has a logical $\bar{Z}$ gate via Gaussian rotation.
The kitten code is minimal in average energy, thereby minimizing physical error rates.
Our work suggests another option: we can instead start with a code that corrects two photon losses and then use our stabilizer-to-gate construction to reduce correction to only one photon loss, gaining a potentially non-Clifford gate.
A simple example would be the 0-2-5 Vandermonde code ($(S, n_{0}, n_{1}) = (5, 0, 2)$ with $K = 2$; codewords given in Eq.~\eqref{eqn:025words} in Appendix~\ref{app:errorDetection}), which corrects one photon loss and has a non-Clifford order-$5$ logical phase gate $\bar{Z}(2\pi/5)$ via Gaussian rotation.
On the one hand, this new code will have modestly higher physical error rates relative to the kitten code due to its higher average photon number (in fact, the logical code space has the same average photon number as the kitten code, leading to no increase of rate for the first error), along with more complicated error syndromes and recovery operations.
On the other hand, this code has a logical non-Clifford gate that can be implemented as a simple `software update,' which can be fast, high-fidelity, and error-transparent~\cite{ma_error-transparent_2020}.
Combining this with a single logical $H$ gate -- ideally also error-transparent~\cite{wetherbee_mathematical_2025,roy_error_2026} -- could enable the construction of general gates through Solovay-Kitaev style construction that are robust to photon loss, opening up a path towards fault-tolerant universal control of a bosonic logical qubit.
Determining which of these options is preferred in a given situation -- a code that minimizes physical error rates and enables simpler recovery, or a code that enables more fault-tolerant gates -- undoubtedly involves weighing many practical and experimental factors, a full consideration of which would be an important direction for future work.

\begin{acknowledgements}
We gratefully acknowledge the support of the Aref and Manon Lahham Faculty Fellowship for support of O. C. W.. 
B.R. acknowledges support of the Natural Sciences and Engineering Research Council of Canada (NSERC) and from Fonds de Recherche du Québec - Nature et Technologie (FRQNT).
V.V.A. acknowledges NSF grant OMA2120757 (QLCI).
V.F. acknowledges support by the NSF under award number 2512537.

\end{acknowledgements}

\textbf{Author Contributions:}
O. C. W. performed the research, with supervision and guidance from V. F. and B. R.. Y. X. and V. V. A. developed the extension to multimode codes, and Y. X. wrote App.~\ref{app:multimode} with input from all authors. O. C. W. wrote the manuscript with oversight from V. F. and input from B. R..

\onecolumngrid
\appendix

\section{Vandermonde code coefficients and dihedral symmetry}\label{app:vandCoeffs}
Here, we prove several properties of Vandermonde code coefficients and Fock-space structures that are referenced in the main text.
Recall that the Vandemonde code coefficients $c_{0}, c_{1}, \dots, c_{K}$ are defined, up to phase, in Eq.~\eqref{eqn:thm1Coeffs} as
\begin{equation}
    \begin{aligned}
        |c_{k}|^{2} & = \frac{1}{\prod_{i \neq k}|n_{k} - n_{i}|} ~,
    \end{aligned}
    \label{eqn:vandCoeffs}
\end{equation}
for ascending Fock support $n_{0} < n_{1} < \dots < n_{K}$.

As mentioned in Sec.~\ref{sec:vand}, when the Fock support is evenly spaced so that $n_{k} = kN$, the squared coefficients in Eq.~\eqref{eqn:vandCoeffs} are proportional to binomial coefficients:
\begin{equation}
    \begin{aligned}
        |c_{k}|^{2} & = \frac{1}{\prod_{i \neq k}|kN - iN|} \\
        & = \frac{1}{N^{K}\left(\prod_{i = 0}^{k - 1}(k - i)\right)\left(\prod_{i = k + 1}^{K}(i - k)\right)} \\
        & = \frac{1}{N^{K}}\frac{1}{\left(\prod_{j = 1}^{k}j\right)\left(\prod_{j = 1}^{K - k}j\right)} \\
        & = \frac{1}{N^{K}K!}\frac{K!}{k!(K - k)!} \\
        & = \frac{1}{N^{K}K!}\binom{K}{k} ~.
    \end{aligned}
\end{equation}

As discussed in Sec.~\ref{sec:dihedral}, the Fock-space structure of a $(S, n_{0}, n_{1})$ cyclic code with odd $K$ is invariant under the flip symmetry $\hat{X}_{d}$ for spin dimension $d = \frac{1}{2}(K - 1)S + n_{0} + n_{1} + 1$.
Specifically, the action of $\hat{X}_{d}$, given by $\hat{X}_{d}|n\rangle = |(d - 1) - n\rangle$, swaps the Fock support of $|0_{L}\rangle$ and $|1_{L}\rangle$ when $K$ is odd.
This follows from the Fock-space structure of Eq.~\ref{eqn:fockStructure}, since for odd $K$ and even $k$ we have
\begin{equation}
    \begin{aligned}
        \hat{X}_{d}\left|\frac{k}{2}S + n_{0}\right\rangle & = \left|\left(\frac{1}{2}(K - 1)S + n_{0} + n_{1}\right) - \left(\frac{k}{2}S + n_{0}\right)\right\rangle \\
        & = \left|\frac{(K - k) - 1}{2}S + n_{1}\right\rangle ~,
    \end{aligned}
\end{equation}
and for odd $k$ we have
\begin{equation}
    \begin{aligned}
        \hat{X}_{d}\left|\frac{k - 1}{2}S + n_{1}\right\rangle & = \left|\left(\frac{1}{2}(K - 1)S + n_{0} + n_{1}\right) - \left(\frac{k - 1}{2}S + n_{1}\right)\right\rangle \\
        & = \left|\frac{K - k}{2}S + n_{0}\right\rangle ~.
    \end{aligned}
\end{equation}
Thus, $\hat{X}_{d}$ sends the Fock state of the $k$-th coefficient, which is in the support of $|0_{L}\rangle$ ($|1_{L}\rangle$) for $k$ even (odd), to the Fock state of the $(K - k)$-th coefficient, which is in the support of $|1_{L}\rangle$ ($|0_{L}\rangle$) for $k$ odd (even), as claimed.

Since the Vandermonde code coefficients in Eq.~\eqref{eqn:vandCoeffs} only depend on the relative spacing between pairs of occupied Fock states, this symmetry of the Fock-space structure implies that the Vandermonde code coefficients are also symmetric for odd $K$.
Explicitly, for odd $K$ and even $k$ we have
\begin{equation}
    \begin{aligned}
        \frac{1}{|c_{k}|^{2}} & = \left(\prod_{i\,\text{even} \neq k}|n_{k} - n_{i}|\right)\left(\prod_{i\,\text{odd}}\left|n_{k} - n_{i}\right|\right) \\
        & = \left(\prod_{j = 0, j \neq \frac{k}{2}}^{(K - 1)/2}\left|n_{k} - n_{2j}\right|\right)\left(\prod_{j = 0}^{(K - 1)/2}|n_{k} - n_{2j + 1}|\right) \\
        & = \left(\prod_{j = 0, j \neq \frac{k}{2}}^{(K - 1)/2}\left|\left(\frac{k}{2}S + n_{0}\right) - \left(jS + n_{0}\right)\right|\right)\left(\prod_{j = 0}^{(K - 1)/2}\left|\left(\frac{k}{2}S + n_{0}\right) - \left(jS + n_{1}\right)\right|\right) \\
        & = \prod_{j = 0, j \neq \frac{k}{2}}^{(K - 1)/2}\left|\left(\frac{K - 1}{2} - j\right)S + n_{1} - \left(\frac{K - 1}{2} - \frac{k}{2}\right)S - n_{1}\right| \times \\
        & \hspace{30pt} \prod_{j = 0}^{(K - 1)/2}\left|\left(\frac{K - 1}{2} - j\right)S + n_{0} - \left(\frac{K - 1}{2} - \frac{k}{2}\right)S - n_{1}\right| \\
        & = \left(\prod_{j' = 0, j' \neq \frac{K - k - 1}{2}}^{(K - 1)/2}\left|\left(j'S + n_{1}\right) - n_{K - k}\right|\right)\left(\prod_{j' = 0}^{(K - 1)/2}\left|\left(j'S + n_{0}\right) - n_{K - k}\right|\right) \\
        & = \left(\prod_{i'\,\text{odd} \neq K - k}|n_{i'} - n_{K - k}|\right)\left(\prod_{i'\,\text{even}}|n_{i'} - n_{K - k}|\right) \\
        & = \frac{1}{|c_{K - k}|^{2}} ~,
    \end{aligned}
\end{equation}
and likewise for odd $k$ by relabeling, since when $K$ is odd, every odd $k$ can be written as $k = K - k'$ for some even $k'$.
Thus, the Fock-grid coefficients of an odd $K$ Vandermonde code are symmetric, and since the Fock-space structure of a cyclic code is flip-symmetric for spin dimension $d = \frac{1}{2}(K - 1)S + n_{0} + n_{1} + 1$ when $K$ is odd, we can conclude that odd $K$ Vandermonde codes have dihedral symmetry, with $\hat{X}_{d}$ specifically swapping $|0_{L}\rangle$ and $|1_{L}\rangle$.
Of course, since only the magnitudes of the Vandermonde coefficients are constrained, this only holds if the phases of the $c_{k}$ are also chosen to be symmetric.

\section{Error detection protocols}\label{app:errorDetection}
Here we provide more details regarding the operations used in the error detection protocol described in Sec.~\ref{sec:errorDetection}, and present the corresponding protocols for the finite-support $(5, 0, 2)$, $K = 2$ and $(8, 0, 3)$, $K = 3$ codes.
For Vandermonde coefficients, the former code has logical and error words given by
\begin{equation}
    \begin{aligned}
        |0_{L}\rangle & = \sqrt{\frac{3}{5}}|0\rangle + \sqrt{\frac{2}{5}}|5\rangle ~, \\
        |1_{L}\rangle & = |2\rangle ~,
    \end{aligned}\hspace{40pt}
    \begin{aligned}
        |0_{\hat{a}}\rangle & = |4\rangle ~, \\
        |1_{\hat{a}}\rangle & = |1\rangle ~,
    \end{aligned}
    \label{eqn:025words}
\end{equation}
and the latter code has logical, first error, and second error words given by
\begin{equation}
    \begin{aligned}
        |0_{L}\rangle & = \sqrt{\frac{5}{16}}|0\rangle + \sqrt{\frac{11}{16}}|8\rangle ~, \\
        |1_{L}\rangle & = \sqrt{\frac{11}{16}}|3\rangle + \sqrt{\frac{5}{16}}|11\rangle ~,
    \end{aligned}\hspace{30pt}
    \begin{aligned}
        |0_{\hat{a}}\rangle & = |7\rangle ~, \\
        |1_{\hat{a}}\rangle & = \sqrt{\frac{3}{8}}|2\rangle + \sqrt{\frac{5}{8}}|10\rangle ~,
    \end{aligned}\hspace{30pt}
    \begin{aligned}
        |0_{\hat{a}^{2}}\rangle & = |6\rangle ~, \\
        |1_{\hat{a}^{2}}\rangle & = \sqrt{\frac{3}{28}}|1\rangle + \sqrt{\frac{25}{28}}|9\rangle ~.
    \end{aligned}
    \label{eqn:03811words}
\end{equation}

Combining Ramsey interferometry measurement (RIM)~\cite{sun_tracking_2014} with the selective parity manifold driving of Ref.~\cite{ni_beating_2023}, we consider the Hamiltonian
\begin{equation}
    \begin{aligned}
        H / \hbar & = -\chi \hat{n}|e\rangle\langle e| + H_{e} + H_{o} ~, \\
        H_{e/o} & = \Omega_{e/o}\sum_{n = 1}^{M}\left(e^{-i(\delta_{n, e/o}t + \phi)}|e\rangle\langle g| + \text{h.c.}\right) ~,
    \end{aligned}
\end{equation}
in the interaction picture, where $|g\rangle$ and $|e\rangle$ are the ground and excited states, respectively, of the ancilla qubit, which is dispersively coupled to the cavity housing the bosonic cyclic code.
Only allowing one of $\Omega_{e}$ and $\Omega_{o}$ to be nonzero at a time, setting the drive frequency detunings to $\delta_{n, e} = (2M - 2n)\chi$ and $\delta_{n, o} = (2M - 2n - 1)\chi$, and ensuring $2M\chi \gg \Omega_{e/o}$ as in~\cite{ni_beating_2023}, we can selectively drive the qubit about any equatorial axis for the even and odd Fock parity manifolds, respectively.
This also requires the highest occupied Fock state to be much less than $M$ so that the off-resonant detunings appear nearly symmetric to any occupied Fock state, yielding a cyclic evolution for any drive time $T$ satisfying $\chi T = m\pi$ for $m \in \mathbb{Z}$~\cite{ni_beating_2023}.
Thus, whenever the drive time obeys this condition, and $M$ is much greater than the highest occupied Fock state, we have the factor of $M$ in the condition $2M\chi \gg \Omega_{e/o}$, which allows for a faster Rabi rate.

Using this Hamiltonian, we can implement the three operations used in the protocol described in Sec.~\ref{sec:errorDetection}.
In particular, the idle evolution operations $U = \exp\left(it\hat{n}|e\rangle\langle e|\right)$ (blue) can be implemented by setting $\Omega_{e} = \Omega_{o} = 0$ and waiting for time $t' = \frac{t}{\chi}$.
The even/odd-select equatorial qubit rotations $R_{\cos(\phi)x + \sin(\phi)y}(\theta)$ (red/yellow) can be implemented by choosing $\Omega_{e/o}$ and drive time $T$ such that $\Omega_{e/o}T = \frac{\theta}{2}$ and $\chi T = m\pi$ for $m \in \mathbb{Z}$, and setting the other $\Omega_{o/e} = 0$.
By judiciously composing these three types of operations, we can move each relevant Fock state around its copy of the ancilla Bloch sphere, so that after each adaptive RIM step, the right Fock states end up in $|e\rangle$ to yield the desired POVM element.
In Fig.~\ref{fig:errorDetection}, we presented the protocol for the $(8, 0, 3)$, $K = 2$ code, which involved three relatively simple RIM steps.
In Fig.~\ref{fig:errorDetection025} and Fig.~\ref{fig:errorDetection03811}, we present the protocols for the $(5, 0, 2)$, $K = 2$ and $(8, 0, 3)$, $K = 3$ codes, respectively, which involve more complex RIM steps.
The latter protocol is particularly notable since it allows for non-destructive detection of up to two photon losses, which can also be theoretically exactly corrected if the code has Vandermonde coefficients.
We expect that similar protocols exist for other small finite-support cyclic codes, but they have to be worked out on a case-by-case basis (perhaps requiring some ingenuity) and may involve more complex RIM steps.

\begin{figure*}[ht]
    \centering
    \includegraphics[width=\textwidth]{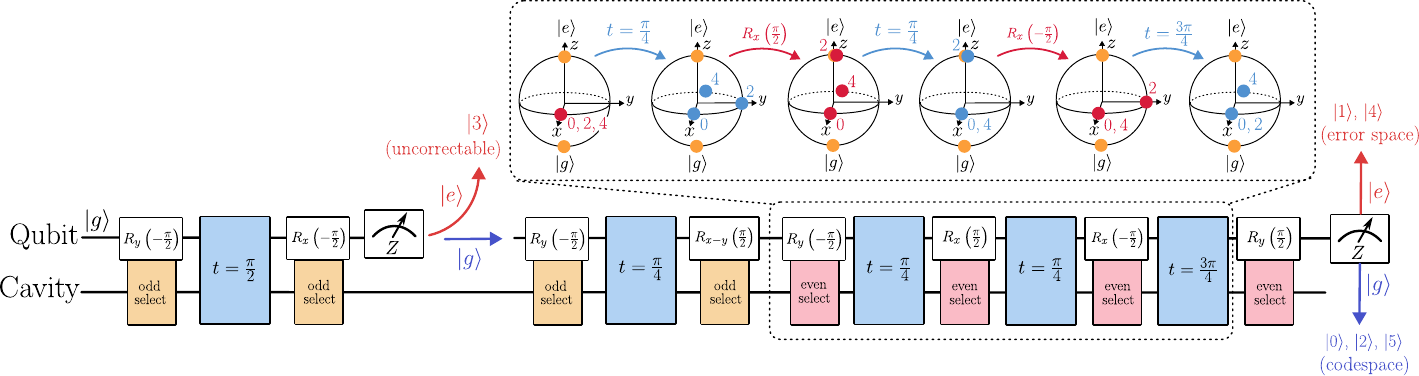}
    \caption{\textbf{Error detection protocol for finite-support $(5, 0, 2)$ code with $K = 2$.}
    This two-step adaptive RIM sequence involves the same types of operations as in Fig.~\ref{fig:errorDetection}, although with the evolution of the relevant Fock states around the ancilla Bloch sphere only shown for the even-select part of the second RIM step.
    A measurement of $|e\rangle$ after the first RIM step implies occupation of the $|3\rangle$ Fock state, corresponding to an uncorrectable error.
    Following instead a measurement of $|g\rangle$, the more involved second RIM step then distinguishes the codespace $\{|0\rangle, |2\rangle, |5\rangle\}$ from the error space $\{|1\rangle, |4\rangle\}$, accomplishing the desired POVM and error detection.
    Note that, as presented, in the second RIM step, the $|1\rangle$ state picks up a phase $-1$ relative to the $|4\rangle$ state, and the $|2\rangle$ state picks up a phase $i$ relative to the $|0\rangle$ and $|5\rangle$ states.
    The first of these can be corrected by adding a $t = \pi$ idle evolution $U = \exp\left(i\pi\hat{n}|e\rangle\langle e|\right)$ at the end of the protocol.
    The second can be corrected by adding an even-select $R_{x}\left(\pi\right)$ rotation, followed by a $t = \frac{3\pi}{4}$ idle evolution, followed by an even-select $R_{x}\left(-\pi\right)$ rotation at the end of the protocol.
    }
    \label{fig:errorDetection025}
\end{figure*}

\begin{figure*}[ht]
    \centering
    \includegraphics[width=\textwidth]{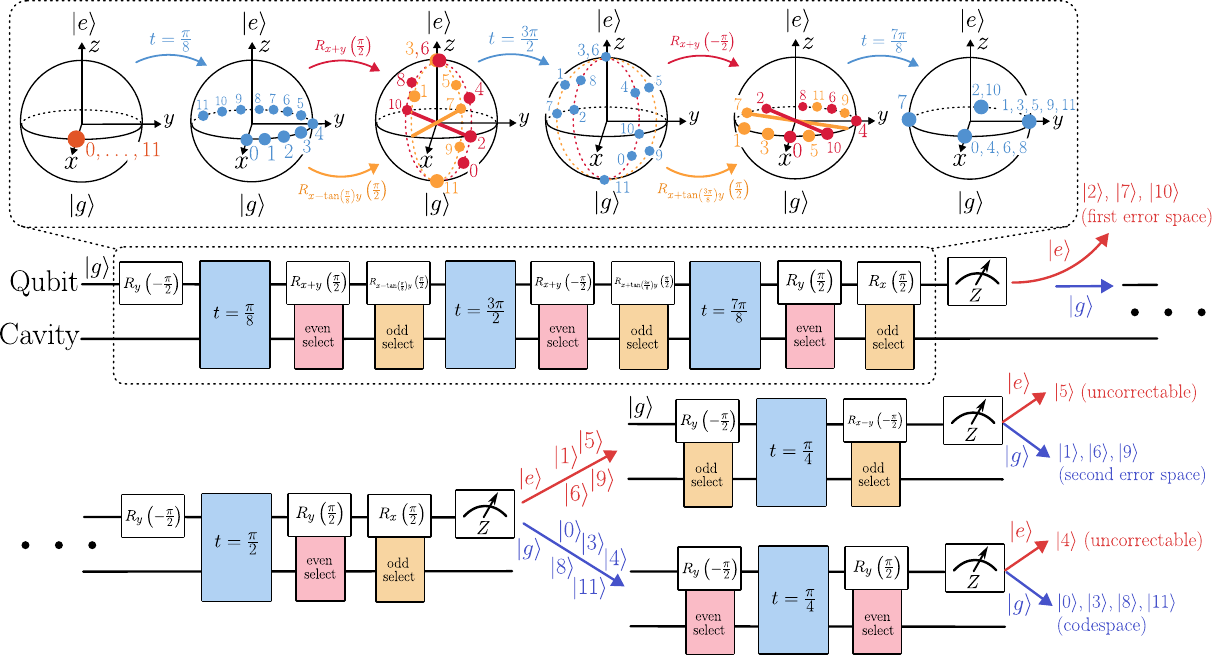}
    \caption{\textbf{Error detection protocol for finite-support $(8, 0, 3)$ code with $K = 3$.}
    This three-step adaptive RIM sequence involves the same types of operations as in Fig.~\ref{fig:errorDetection}, although with the evolution of the relevant Fock states around the ancilla Bloch sphere only shown for the first RIM step.
    A measurement of $|e\rangle$ after the first RIM step implies occupation of the first error space $\{|2\rangle, |7\rangle, |10\rangle\}$.
    Following instead a measurement of $|g\rangle$, the second RIM step then distinguishes the Fock state manifolds $\{|1\rangle, |5\rangle, |6\rangle, |9\rangle\}$ and $\{|0\rangle, |3\rangle, |4\rangle, |8\rangle, |11\rangle\}$, which are then further broken down in the final RIM step into an uncorrectable error $\{|5\rangle\}$ and the second error space $\{|1\rangle, |6\rangle, |9\rangle\}$, and an uncorrectable error $\{|4\rangle\}$ and the codespace $\{|0\rangle, |3\rangle, |8\rangle, |11\rangle\}$, respectively.
    This accomplishes the desired POVM, allowing for the detection of up to two photon loss errors.
    Note that, as presented, in the first RIM step, the $|2\rangle$ and $|7\rangle$ states pick up phases $-1$ and $i$, respectively, relative to the $|10\rangle$ state, and the $|3\rangle$ state picks up a phase $-1$ relative to the $|0\rangle$, $|8\rangle$, and $|11\rangle$ states.
    The first of these can be corrected by adding a $t = \frac{\pi}{8}$ idle evolution, followed by an even-select $R_{x}\left(\pi\right)$ rotation, followed by a $t = \frac{9\pi}{8}$ idle evolution, followed by an even-select $R_{x}\left(-\pi\right)$ rotation before the second RIM step.
    The second can be corrected by redoing the first RIM step, optionally dropping the even-select rotations, at the end of the protocol.
    The $|6\rangle$ state also picks up a phase $-1$ relative to the $|1\rangle$, $|5\rangle$, and $|9\rangle$ states during the first RIM step, but this phase cancels out during the second RIM step.
    }
    \label{fig:errorDetection03811}
\end{figure*}

\section{Multimode code extensions}\label{app:multimode}

\subsection{$\mathbb{Z}_{5}$ simplex code}\label{app:simplex}
In this section, we show how the $\mathbb{Z}_5$ simplex code introduced in Ref.~\cite{jain_quantum_2024} can be converted into a drum code, whose codewords consist of non-uniform superpositions of coherent states arranged as regular prisms in the coherent-state parameter space\footnote{The term “drum” refers to the geometric resemblance between these regular prism structures and the shape of a drum.}, that admits a Gaussian implementation of the $\frac{2\pi}{5}$-phase gate.
This is achieved by selecting logical codewords from different stabilizer-syndrome sectors, at the cost of reducing the loss distance by one.

The original codewords of the $\mathbb{Z}_5$ simplex code are defined as
\begin{equation}
\begin{aligned}
    \ket{\overline{+}}&\propto \sum_{j=0}^{4} \ket{\omega^j \alpha, \omega^{2j} \alpha},\\
    \ket{\overline{-}}&\propto \sum_{j=0}^{4} \ket{-\omega^j \alpha, -\omega^{2j} \alpha},\\
    \ket{\overline{0}}&\propto \sum_{\substack{n,m\geq 0\\ n+2m \mod 5=0\\ n+m ~\text{is even}}} \frac{\alpha^{n+m}}{\sqrt{n!m!}} \ket{n,m},\\
    \ket{\overline{1}}&\propto \sum_{\substack{n,m\geq 0\\ n+2m \mod 5=0\\ n+m ~\text{is odd}}} \frac{\alpha^{n+m}}{\sqrt{n!m!}} \ket{n,m},
    \end{aligned}
\end{equation}
where $\omega=e^{i 2\pi/5}$ is the fifth root of unity and $\alpha$ is chosen to be a real amplitude for convenience. 

This code has stabilizers 
\begin{equation}
    \{\omega^{\hat{a}_1^\dagger \hat{a}_1+ 2\hat{a}_2^\dagger \hat{a}_2}, \hat{a}_1^2 \hat{a}_2^4 -\alpha^6, \hat{a}_1^3 \hat{a}_2-\alpha^4\},
\end{equation}
and logical operators $\bar{Z}=(-1)^{\hat{a}_1^\dagger \hat{a}_1 +\hat{a}_2^\dagger \hat{a}_2},\,\bar{X}= \frac{\hat{a}_1 \hat{a}_2^2}{\alpha^3}$.
Note that the minimum degree loss monomial that yields a logical $\bar{X}$ error is 3, so this code can correct single-photon loss in either of the modes. 

The $\mathbb{Z}_5$ modular stabilizer $\hat{S}_{5}=\omega^{\hat{a}_1^\dagger \hat{a}_1 + 2\hat{a}_2^\dagger \hat{a}_2}$ imposes the constraint $\hat{n}_1 +2\hat{n}_2 \mod 5 \equiv 0$.
Single-photon losses $\hat{a}_1$ and $\hat{a}_2$ yield distinct syndromes 
\begin{equation}
\begin{aligned}
    \hat{S}_{5} \hat{a}_1 \ket{\psi}&= \alpha \omega^4 \ket{\psi},\\
    \hat{S}_{5} \hat{a}_2 \ket{\psi}&= \alpha \omega^3 \ket{\psi},
    \end{aligned}
\end{equation}
for any $\ket{\psi}=\alpha \ket{+_{L}} + \beta \ket{-_{L}}$ in the codespace.
These distinct syndromes allow the two loss processes to be unambiguously identified and corrected by measuring the stabilizer $\hat{S}_{5}$.

The codewords of the simplex code form a spherical $2$-design \cite{delsarte1991spherical,roy2014complex,mohammadpour2024complex}, implying that all loss–gain monomials up to degree two are automatically detectable.
We verify some of the Knill–Laflamme conditions explicitly,
\begin{equation}
    \begin{aligned}
        \langle 0_{L}| \hat{a}_1^\dagger \hat{a}_1 |0_{L} \rangle&= \langle 0_{L}| \hat{a}_2^\dagger \hat{a}_2 |0_{L} \rangle= \langle 1_{L}| \hat{a}_1^\dagger \hat{a}_1 |1_{L} \rangle=
        \langle 1_{L}| \hat{a}_2^\dagger \hat{a}_2 |1_{L} \rangle=\alpha^2,\\
        \langle 0_{L}| \hat{a}_1^\dagger \hat{a}_2 \ket{1_{L}}&=\langle 0_{L}| \hat{a}_1^\dagger \hat{a}_1 |1_{L} \rangle = \langle 0_{L}| \hat{a}_2^\dagger \hat{a}_2 | 1_{L} \rangle =0,\\
        \langle \pm_{L}| \mathbf{\hat{a}}^{\dagger \mathbf{p}} \mathbf{\hat{a}}^{ \mathbf{p}}| \pm_{L} \rangle &\sim \frac{\alpha^{2|\mathbf{p}|}}{5} \exp(-\alpha^2 d_E ),
    \end{aligned}
\end{equation}
where $d_E=1.5$ is the minimum Euclidean distance~\cite{jain_quantum_2024} between coherent-state points in the $\ket{\pm_{L}}$ code constellations, and $\mathbf{\hat{a}}^{\dagger \mathbf{p}} \mathbf{\hat{a}}^{\mathbf{p}} \equiv \hat{a}_1^{\dagger p_1} \hat{a}_2^{\dagger p_2} \hat{a}_1^{p_1} \hat{a}_2^{p_2}$ is short-hand notation for a dephasing (balanced loss--gain) error, written using the multi-index $\mathbf{p} = (p_1, p_2)$ with operator vector $\mathbf{\hat{a}} = (\hat{a}_1, \hat{a}_2)$ and modulus $|\mathbf{p}| = p_1 + p_2$.
As can be seen, this code can exactly correct single-photon losses $\hat{a}_1, \hat{a}_2$ and exponentially suppress dephasing error $\mathbf{\hat{a}}^{\dagger \mathbf{p}} \mathbf{\hat{a}}^{ \mathbf{p}}$.

We observe that the syndrome of stabilizer $\hat{S}_{5}$ can take values from integer multiples of $\frac{2\pi}{5}$.
This allows us to construct a drum code by selecting logical states from different syndrome sectors of
$\hat{S}_{5}$ to define the new logical $|0/1_{L}\rangle$ states such that the original stabilizer $\hat{S}_{5}$ becomes a logical phase gate.
The drum codewords are defined as
\begin{equation}
    \begin{aligned}
        \ket{0_{L}}_{\text{drum}}&=\ket{0_{L}} ~, \\
        \ket{1_{L}}_{\text{drum}}&=\frac{\hat{a}_2}{\alpha}\ket{1_{L}} ~,
    \end{aligned}
\end{equation}
where the logical $\ket{0_{L}}_{\text{drum}}$ is the same as the one for the $\mathbb{Z}_5$ simplex code, and the logical $\ket{1_{L}}_{\text{drum}}$ is obtained by applying a loss to the second mode to the logical $|1_{L}\rangle$ state of the simplex code.

The modified codewords are stabilized by $\hat{a}_1^2 \hat{a}_2^4, \hat{a}_1^3 \hat{a}_2$, in particular, the adjoint of the original modular stabilizer $\hat{S}_{5}^\dagger=\omega^{-\hat{a}_1^\dagger \hat{a}_1 - 2\hat{a}_2^\dagger \hat{a}_2}$ now acts as a logical $\frac{4\pi}{5}$-phase gate
\begin{equation}
    \begin{aligned}
        \hat{S}_{5}^\dagger \ket{0_{L}}_{\text{drum}}&=\ket{0_{L}},\\
        \hat{S}_{5}^\dagger \ket{1_{L}}_{\text{drum}}&= \hat{S}_{5}^\dagger \frac{\hat{a}_2}{\alpha}\ket{1_{L}}=\omega^2 \ket{1_{L}}_{\text{drum}}.
    \end{aligned}
\end{equation} 
The logical $X$ operator is now $\bar{X}_{\text{drum}}=\frac{\hat{a}_1 \hat{a}_2}{\alpha^2}$.

Since the logical $|0/1_{L}\rangle$ states of the drum code are in the different syndrome sectors of the $\mathbb{Z}_5$ simplex code, the drum code can no longer correct single-photon losses but can still detect them
\begin{equation}
\begin{aligned}
    \langle 0_{L}| \hat{a}_1 | 0_{L} \rangle_{\text{drum}}&= \langle 0_{L}| \hat{a}_2 | 0_{L} \rangle_{\text{drum}}=  \langle 1_{L}| \hat{a}_1 | 1_{L} \rangle_{\text{drum}}=  \langle 1_{L}| \hat{a}_2 | 1_{L} \rangle_{\text{drum}}=\alpha,\\
    \langle 0_{L} | \hat{a}_1 | 1_{L} \rangle_{\text{drum}}& = \langle 0_{L} | \hat{a}_2 | 1_{L} \rangle_{\text{drum}}=0,\\
    \langle \pm_L | \mathbf{\hat{a}}^{\dagger \mathbf{p}} \mathbf{\hat{a}}^{ \mathbf{p}}| \pm_{L} \rangle_{\text{drum}} &\sim \frac{\alpha^{2|\mathbf{p}|}}{5} \exp(-\alpha^2 d_E ).
    \end{aligned}
\end{equation}
The asymptotic dephasing suppression is unchanged since it is completely determined by the minimum Euclidean distance between the constellations of the $|\pm_{L}\rangle$ codewords, which is the same as the one for the simplex code.
In summary, by reducing the loss distance by one, we obtain a drum code that admits a Gaussian implementation of the $\frac{2\pi}{5}$-phase gate while retaining dephasing suppression and loss detectability.

\subsection{Octahedron drum code}
In this section, we introduce a code derived from an octahedral constellation and show how it can be converted into a drum code that supports a Gaussian $\frac{2\pi}{3}$-phase gate. 

We begin with an octahedral-constellation code whose logical states are
\begin{equation}
\begin{aligned}
    \ket{+_{L}}& \propto \sum_{j=0}^2 \ket{\omega^j \alpha_1, \alpha_2} + \ket{-\omega^j \alpha_1, -\alpha_2},\\
    \ket{-_{L}}& \propto \sum_{j=0}^2 \ket{-\omega^j \alpha_1, \alpha_2} + \ket{\omega^j \alpha_1, -\alpha_2},\\
    \ket{0_{L}}&\propto \sum_{\substack{n,m\geq 0\\ n \mod 3=0\\ n, m ~\text{are even}}} \frac{\alpha_1^n \alpha_2^m}{\sqrt{n!m!}} \ket{n,m},\\
    \ket{1_{L}}&\propto \sum_{\substack{n,m\geq 0\\ n \mod 3=0\\ n, m ~\text{are odd}}} \frac{\alpha_1^n \alpha_2^m}{\sqrt{n!m!}} \ket{n,m},
    \end{aligned}
\end{equation}
where $\omega=e^{i \frac{2\pi}{3}}$,$\alpha_1=\sqrt{\frac{2}{3}} \alpha, \alpha_2=\sqrt{\frac{1}{3}} \alpha$ and $\alpha_1^2+\alpha_2^2=\alpha^2$. 

The octahedron code is stabilized by 
\begin{equation}
    \{\omega^{\hat{a}_1^\dagger \hat{a}_1}, (-1)^{\hat{a}_1^\dagger \hat{a}_1 + \hat{a}_2^\dagger \hat{a}_2} ,\hat{a}_1^6-\alpha_1^6, \hat{a}_2^2-\alpha_2^2\}, 
\end{equation}
with logical operators $\bar{Z}=(-1)^{\hat{a}_2^\dagger \hat{a}_2}, \bar{X}=\frac{\hat{a}_1^3 \hat{a}_2}{\alpha_1^3 \alpha_2}$.
Because the logical states $\ket{0_{L}}, \ket{1_{L}}$ occupy distinct sectors of Fock space, namely $\{\ket{6p, 2q }\}_{p,q \geq 0}$ and $\{\ket{6p+3, 2q+1 }\}_{p,q \geq 0}$, any loss monomial of degree up to three is detectable.
In particular, single-photon loss in either mode is correctable.
We can verify loss operators $\hat{a}_1$ and $\hat{a}_2$ have distinct syndromes for modular stabilizers 
\begin{equation}
    \hat{S}_3=\omega^{\hat{a}_1^\dagger \hat{a}_1} ~~\text{and}~~ \hat{S}_2=(-1)^{\hat{a}_1^\dagger \hat{a}_1 +\hat{a}_2^\dagger \hat{a}_2}.
\end{equation} 
Table~\ref{tab:drum_syndrome} lists the syndromes of all loss monomials up to degree three.

\begin{table}[h]
    \centering
    \begin{tabular}{|c|c|c|}
    \hline
         & $\hat{S}_3$ syndrome & $\hat{S}_2$ syndrome \\
         \hline
         $\hat{a}_1$&  $\omega^2$ & $-1$  \\
         $\hat{a}_1^2$&  $\omega$ & $1$  \\
         $\hat{a}_1^3$&  $1$ & $-1$  \\
         $\hat{a}_2$ &  $1$ & $-1$ \\
         $\hat{a}_2^3$ &  $1$ & $-1$ \\
         $\hat{a}_1 \hat{a}_2$ & $\omega^2$ & $1$\\
         $\hat{a}_1^2 \hat{a}_2$ & $\omega$ & $-1$\\
         $\hat{a}_1 \hat{a}_2^2$ & $\omega^2$ & $-1$\\
         \hline
    \end{tabular}
    \caption{Syndrome table of all loss monomial with degree less than or equal to 3.
    Here we skip the dissipator $\hat{a}_2^2$ since it is proportional to logical identity and has trivial syndromes. }
    \label{tab:drum_syndrome}
\end{table}

The codewords of the octahedron code form a spherical $3$-design \cite{delsarte1991spherical,roy2014complex,mohammadpour2024complex}, implying that all loss–gain monomials up to degree three are automatically detectable.
We therefore omit a detailed Knill–Laflamme analysis.
Dephasing errors are exponentially suppressed, with
\begin{equation}
    \langle \mp_{L}| \mathbf{\hat{a}}^{\dagger \mathbf{p}} \mathbf{\hat{a}}^{ \mathbf{p}}| \pm_{L} \rangle \sim \frac{\alpha^{2|\mathbf{p}|}}{6} \exp(-\alpha^2 d_E ),
\end{equation}
where $d_E=\frac{2}{3}$ is the minimum Euclidean distance between coherent-state points in the $\ket{\pm_{L}}$ code constellations, with $\mathbf{\hat{a}}^{\dagger \mathbf{p}} \mathbf{\hat{a}}^{\mathbf{p}}$ the multi-index dephasing notation introduced in App.~\ref{app:simplex}.

We observe that the modular stabilizer $\hat{S}_3$ has eigenvalues from $\{1, \omega, \omega^2\} $ where $\omega$ is the third-root of unity.
This allows us to select codewords from different syndrome sectors of $\hat{S}_3$ and define a drum code that admits $\frac{2\pi}{3}$-phase gate while retaining single-photon loss detectability.
The drum codewords are defined as
\begin{equation}
    \begin{aligned}
        \ket{0_{L}}_{\text{drum}} & =\ket{0_{L}},\\
        \ket{1_{L}}_{\text{drum}} & = \frac{\hat{a}_1}{\alpha_1}\ket{1_{L}}
    \end{aligned}
\end{equation}
By applying the adjoint of the modular stabilizer $\hat{S}_3$ to the drum codewords, we get
\begin{equation}
    \begin{aligned}
        \hat{S}_3^\dagger \ket{0_{L}}_{\text{drum}}&= \ket{0_{L}}_{\text{drum}},\\
        \hat{S}_3^\dagger \ket{1_{L}}_{\text{drum}}&= \omega\ket{1_{L}}_{\text{drum}},
    \end{aligned}
\end{equation}
showing that the previous stabilizer $\hat{S}_3^\dagger$ is now a logical $\frac{2\pi}{3}$-phase gate of the drum code.
The logical $\bar{X}$ operator of the drum code is $\bar{X}_{\text{drum}}=\frac{\hat{a}_1^2 \hat{a}_2}{\alpha_1^2 \alpha_2}$.
Notably, most error-correction properties of the octahedron code are inherited by the drum code: it can detect all degree-2 loss monomials and exhibits the same asymptotic suppression of dephasing errors.

\twocolumngrid
\bibliography{vf-references.bib, ow-references.bib, yx-references.bib}

\end{document}